\def\cite{\citet}
\def\Re{\mathbf{R}}
\newcommand{\df}[1]{{\em #1 }}
\definecolor{ForestGreen}{rgb}{.13,.54,.13}
\definecolor{BrickRed}{rgb}{.80,.26,.33}
\newtheorem{theorem}{Theorem}
\newtheorem{lemma}{Lemma}
\newtheorem{corollary}{Corollary}
\newtheorem{proposition}{Proposition}
\newtheorem{criterion}{Criterion}
\newtheorem{definition}{Definition}
\newcommand{\R}{\mathbb{R}}
\newcommand{\Z}{\mathbb{Z}}
\renewcommand{\Re}{\mathbb{R}}
\newcommand{\supp}{\mathrm{supp}}
\theoremstyle{definition}
\newtheorem{example}{Example}
\theoremstyle{remark}
\newtheorem{remark}{Remark}
\title{Efficiency in Random Resource Allocation and Social Choice\footnote{We are grateful to (in
alphabetic order) Ron Holzman, Luciano Pomatto, and  Omer Tamuz for discussions that inspired this work and {to Haris Aziz, Felix Brandt, and anonymous reviewers for pointing out important literature omissions in the earlier draft. We thank 
Samson Alva, Anna Bogomolnaia, Herv\'{e} Moulin,
and seminar audiences at the Universidad de Chile and the BFI Conference in Honor of Hugo Sonnenschein for useful comments, and the Linde Institute at Caltech for its financial support.}}}
\author{ \large Federico Echenique\thanks{UC Berkeley. Email:  fede@econ.berkeley.edu. Federico Echenique thanks the National Science Foundation for its support through the grants SES~1558757 and  CNS 1518941.} \ \ \ \ Joseph Root\thanks{University of Chicago. Email:  jroot@uchicago.edu. Joseph Root was partially supported by PIMCO.} \ \ \ \ Fedor Sandomirskiy\thanks{Caltech. Email:  fsandomi@caltech.edu.
Fedor Sandomirskiy was partially supported by the National Science Foundation (grant  CNS 1518941).}}
\date{}
\begin{document}

\maketitle

\begin{abstract}

We study efficiency in general collective choice problems where agents have ordinal preferences and randomization is allowed.  
{We explore the structure of preference profiles where ex-ante and  ex-post efficiency coincide, offer a unifying perspective on the known results, and give several new characterizations.} The results have implications for well-studied mechanisms including random serial dictatorship and a number of specific environments, including the dichotomous, single-peaked, and social choice domains. 
\end{abstract}

\section{Introduction}

Efficiency is among the central desiderata of mechanism and market design. There are at least two reasons for this. First, mechanisms are designed to maximize the welfare of the agents it is meant to serve. When it is possible to make all agents better-off, it is hard to defend not doing so. However, there is generally a trade-off between the simplicity, fairness and efficiency of mechanisms. When these desiderata clash, the designer is forced to compromise. For example, in school choice the most widely used mechanism, deferred acceptance, is fair and simple to use, but not efficient. A second reason for designing efficient mechanisms is that efficiency grants a kind of stability to the outcome. If it were possible for the agents involved (e.g. schools and students, doctors and hospitals, buyers and sellers) to construct an alternative mechanism which could make them all better-off, one might expect them to do so. The mechanism itself is a means by which agents coordinate to improve on a market imperfection, so if one mechanism leaves welfare available to the agents why wouldn't they simply design a second mechanism?

To achieve fairness in discrete settings, it is common to use randomization. A central planner uses a randomization device to decide the outcome. For example, they may flip a coin to decide which of two agents gets a desirable resource, or they might randomly break ties in priorities in deferred acceptance. With randomization and ordinal preferences there are (at least) two natural notions of efficiency: ``ex-post efficiency'' and ``ex-ante efficiency\footnote{Ex-ante efficiency is sometimes called ``ordinal efficiency.''}.'' Ex-post efficiency simply specifies that the randomization device place positive probability only on outcomes that are Pareto optimal as deterministic outcomes. Thus, regardless of the realization of the randomization device, there will be no way to make all agents better-off ex-post. Ex-ante efficiency is stronger. It requires that the random allocation itself cannot be improved; there should be no alternative randomization device which gives all agents a better lottery over outcomes.

The difference between the two notions may not be apparent at first. The following example makes the point in the context of Arrovian social choice. 

\begin{example}\label{ex_2Condorcet}
Consider three agents and six outcomes $X=\{a,b,c,x,y,z\}$. Agents' preferences are as follows
\begin{equation}\label{eq_2Condorcet}
\begin{array}{c}
   a\succ_1  x \succ_1  b \succ_1  y \succ_1  c \succ_1  z\\
   b \succ_2  z \succ_2  c \succ_2  x \succ_2  a \succ_2  y\\
   c \succ_3 y \succ_3 a \succ_3 z \succ_3 b \succ_3 x
\end{array}.
\end{equation}

In words, there are two interlacing Condorcet cycles: one for the outcomes $a,b$ and $c$ and one for the outcomes $x,y$ and $z$. For each pair of outcomes, there is an agent preferring one outcome to the other. As a consequence, each outcome is Pareto optimal, and therefore all lotteries over outcomes are ex-post efficient. We claim that the lottery $p'=\frac{1}{3}x+\frac{1}{3}y+\frac{1}{3}z$ is not ex-ante efficient. Consider the alternative lottery $p=\frac{1}{3}a+\frac{1}{3}b+\frac{1}{3}c$. In $p$, each agent gets a uniform random draw from her 1st, 3rd, and 5th ranked outcomes. By contrast, under $p'$, they get a uniform random draw between her 2nd, 4th, and 6th ranked outcomes. Clearly $p$ improves welfare relative to $p'$. Formally, $p$ is ex-ante efficient and $p'$ is not. 
\end{example}

Despite the fact that ex-post efficiency can leave welfare on the table, situations like those of Example~\ref{ex_2Condorcet} emerge naturally. Consider two ex-ante efficient lotteries $p$ and $p'$ with some agents preferring $p$ and others preferring $p'$. A natural compromise is to implement $\frac{1}{2}p+\frac{1}{2}p'$, giving both sides an equal chance of getting their preferred outcome. This compromise lottery will be ex-post efficient. The problem is that it may no longer be ex-ante efficient.

\begin{example}\label{ex_dichotomous}
Five agents must choose one of four outcomes $a,b,c$ and $d$. Each agent either approves or disapproves of each outcome. Preferences are listed in the following array, with approval being denoted by 1.
\[ 
\begin{array}{c|cccc}
 & a & b & c & d\\
 \hline
1 & 1 & 0 & 1 & 0 \\
2 & 0 & 1 & 0 & 1 \\
3 & 1 & 0 & 0 & 1  \\
4 & 1 & 0 & 0 & 0  \\
5 & 0 & 1 & 1 & 0 \\
\end{array}
\]
Note that all four outcomes are Pareto optimal. In particular, the degenerate lotteries $c$ and $d$ are ex-ante efficient. However, $\frac{1}{2}c+\frac{1}{2}d$ is dominated by $\frac{1}{2}a+\frac{1}{2}b$. 

\end{example}

This is why, despite its widespread use, random serial dictatorship (RSD)\footnote{In random serial dictatorship, agents are chosen one-at-a-time uniformly at random to select their favorite outcome(s) consistent with the choices of earlier dictators. See Section~\ref{sec_model} for a precise description.} is, in general, not ex-ante efficient. For any realization of the dictatorship ordering, the outcome is Pareto optimal, but from an ex-ante point of view, all agents may be made better off. Consider RSD for the preferences from Example~\ref{ex_dichotomous}. If agent 2 is selected as the first dictator, she will limit the outcome to either $b$ or $d$, the two outcomes she approves. If agent $3$ is chosen as the second dictator, since he prefers $d$ to $b$, the outcome of RSD will be $d$\footnote{Later dictators' preferences don't matter since a single outcome has been selected}. Likewise, if the first dictator is $1$ and the second is $5$, the outcome of RSD will be $c$. Hence RSD will place strictly positive probability on both $d$ and $c$. That is, there is some $\alpha>0$ such that both $c$ and $d$ are chosen with probability at least $\alpha$. Then, from Example~\ref{ex_dichotomous}, RSD cannot be ex-ante efficient, since replacing $\alpha c + \alpha d$ with $\alpha a + \alpha b$ will improve welfare. 

In this paper, we describe the settings where compromises of the form $\frac{1}{2}p+\frac{1}{2}p'$ are ex-ante efficient if $p$ and $p'$ are. That is, settings where the set of ex-ante efficient lotteries is convex {which happens  if and only if every ex-post efficient lottery is ex-ante efficient.} For these settings, RSD is immediately seen to be ex-ante efficient. In environments where the set of ex-ante efficient lotteries is not convex, there are still preference profiles for which ex-ante and ex-post efficiency coincide. For example, if all agents top-rank a given outcome, the degenerate lottery on that outcome is the only ex-ante efficient lottery. 
{We describe  those preference profiles and preference domains in which ex-post and ex-ante efficiency coincide. Our approach offers a unifying perspective on known results and implies some new characterizations.}

\subsection{Related Literature}

We study the problem of choosing a random outcome on the basis of agents' ordinal preferences. To our knowledge, the first paper to consider this problem is \citet{gibbard1977manipulation} which studies Arrovian social choice. \citet{gibbard1977manipulation} is an attempt to circumvent the impossibilities of the Gibbard-Satterthwaite theorem (\cite{gibbard1973manipulation}, \cite{satterthwaite1975strategy}) by allowing the mechanism to choose a random outcome. \citet{fishburn1984probabilistic} studies the same environment and showed that allowing for a randomized outcome restores the existence of a Condorcet winner. Probabilistic social choice has also attracted renewed interest in recent years. \citet{brandt2017rolling} provides an overview.

\citet{bogomolnaia2001new} initiates the study of ordinal mechanisms in the context of house allocation (sometimes called ``one-sided matching''). They highlight the difference between ex-post and ex-ante efficiency, showing that no mechanism can simultaneously be fair, ex-ante efficient and strategy-proof. {They formulate a conjecture playing a crucial role in our analysis:} ex-ante efficiency is equivalent to the maximization of social welfare for some utilitarian representation of the ordinal preferences.  \cite{mclennan2002ordinal} confirms the conjecture using a variant of Farkas' Lemma and provides an elegant geometric picture of ex-ante efficiency.
\cite{manea2008constructive} finds a simpler proof of this result, using a characterization of ex-ante efficiency by way of acyclicity due to \cite{KATTA2006231}. \citet{abdulkadirouglu2003ordinal} give an alternative combinatorial description of ex-ante efficiency in terms of ``dominated sets.'' {\cite{carroll2010efficiency} demonstrates the equivalence of ex-ante efficiency and welfare maximization for general domain with indifferences and \cite{aziz2015universal} extend the result beyond von Neumann-Morgenstern utilities.}

We seek to describe the particular preference profiles under which {the sets of ex-ante and ex-post efficient lotteries coincide. The literature has mainly focused on a related question of checking that a given lottery is ex-ante efficient. \cite{aziz2015universal} explores  structural properties of ex-ante efficient lotteries, \cite{aziz2014characterization} derives a combinatorial characterization of ex-ante efficient lotteries which is refined in our paper, \cite{aziz2018tradeoff} study compatibility of ex-ante efficiency with non-manipulability in various domains, and
\cite{maneaTE2009} investigates the fraction of profiles in the housing model where ex-ante efficiency is equivalent to ex-post and shows that this fraction vanishes as the number of agents and objects grows.}

In the dichotomous domain, the difference between ex-ante and ex-post efficiency was first studied by \cite{bogomolnaia2005collective}, who show that RSD is not ex-ante efficient (the focus in their paper is on the simultaneous satisfaction of fairness, incentive, and efficiency properties). 
\cite{itaybendan} finds a condition on a profile of dichotomous preferences for ex-ante and ex-post efficiency to coincide, formulated in terms of balanced coalitions. Item~(3) in our Proposition~\ref{prop:characterizdichotomous} is similar to his result. {Ex-ante efficiency in the dichotomous domain is  studied further by \cite{duddy2015fair} and \cite{aziz2019fair}.}

\citet{zhou1990conjecture} was the first to point out that RSD is not  ex-ante efficient when agents have von Neumann-Morgenstern expected utility preferences. \citet{bogomolnaia2001new} showed that RSD isn't ex-ante efficient with ordinal preferences.

\section{Preliminaries}\label{sec_model}
We describe a general model of collective decision making, allowing agents to have arbitrary (rational) preferences over outcomes. Preferences may display indifference between different outcomes, so the model has  the standard private-good ``housing model,'' as well as the public-good ``social choice model,'' as special cases. We define concepts of ex-ante and ex-post efficiency, and introduce basic tools to analyze them which will be applied in the subsequent sections to particular domains of preferences. 

Let $X$ be a finite set of \df{outcomes}. A \df{preference} over $X$ is a weak order; that is, a complete and transitive binary relation over $X$. When $\succeq$ is a preference, an \df{upper contour set} of $\succeq$ is any set of the form $\{x'\in X:x'\succeq x\}$, for an outcome $x\in X$. A \df{utilitarian representation} of a preference $\succeq$ is a function $u:X\to\Re$ so that $x\succeq y$ iff $u(x)\geq u(y)$. A preference $\succeq$ is said to be \df{strict} if $x\succeq y$ and $y\succeq x$ imply that $x=y$. A preference is \df{dichotomous} if it has a utilitarian representation $u$ with $u(X)\subseteq \{0,1\}$. When $\succeq$ is a preference over $X$, we denote the strict part of $\succeq$ by $\succ$, and the indifference relation by $\simeq$. 
 
Let $N$ then be a finite population of agents, each agent $i$ endowed with a preference relation $\succeq_i$ over $X$. The list of preferences $(\succeq_i)_{i\in N}$ is a \df{preference profile.}

A \df{lottery} over  $X$ is a probability distribution over $X$. The set $\Delta(X)=\{p\in\Re^X_+: \sum_{x\in X}p_x =1\}$ collects all lotteries over $X$.

An outcome $x\in X$ is \df{Pareto optimal} if it cannot be improved upon for any of the agents without harming other agents, i.e., there is no other outcome $y\in X$ such that $y\succeq_i x$ for all agents $i\in N$ and $y\succ_j x$ for some $j\in N$. The set of all Pareto-optimal outcomes is denoted by $X^*$. 
 
A simple way to obtain a Pareto-optimal outcome is via Serial Dictatorship. Given an ordering of the agents, the first agent selects the set of her most preferred outcomes, the second agent refines this choice by selecting her most preferred outcomes among those chosen by the first agent, and so on. Formally, consider a permutation $\sigma$ of the set of agents $N$. The lexicographic preference $\succeq_\mathrm{lex}$ associated with a preference profile $(\succeq_i)_{i\in N}$ and ordering of agents $\sigma$ is defined as follows: $x\succ_\mathrm{lex} y$ if there exist $k\in \{1,\ldots, n\}$ such that for $x\simeq_{\sigma(l)} y$ for $l\in \{1,\ldots, k-1\}$  
 and $x\succ_{\sigma(k)} y$.
 The serial dictatorship for a preference profile $(\succeq_i)_{i\in N}$ and ordering of agents $\sigma$ outputs all the the maximal elements of the corresponding lexicographic preference $\succeq_\mathrm{lex}$:
 $$\mathrm{SD}\big((\succeq_i)_{i\in N},\sigma\big)=\arg\max_X\, \succeq_\mathrm{lex}.$$
 It is well-known that any $x\in \mathrm{SD}\big((\succeq_i)_{i\in N},\sigma\big)$ is Pareto optimal. For the convenience of our readers, we include a proof in Appendix~\ref{app_SD}. 
 
 \smallskip
The following definition extends the concept of Pareto optimality to lotteries over outcomes.
\begin{definition}A lottery $p\in \Delta(X)$ is \df{ex-post Pareto optimal}, or \df{ex-post efficient}, if it is supported on Pareto optimal outcomes; i.e., $p_x>0$ implies that $x$ is Pareto optimal.
\end{definition}

If the ordering of agents $\sigma$ in the definition of a serial dictatorship is taken uniformly at random as well as the outcome $x\in \mathrm{SD}\big((\succeq_i)_{i\in N},\sigma\big)$, we obtain the celebrated Random Serial Dictatorship mechanism (RSD). Formally, let $S_N$ be the set of permutations of $N$. RSD outputs the following lottery over outcomes:
$$\mathrm{RSD}\big((\succeq_i)_{i\in N}\big)=\frac{1}{n!}\sum_{\sigma\in S_n}\mathrm{Uniform}\big(\mathrm{SD}((\succeq_i)_{i\in N},\sigma)\big),$$
where $\mathrm{Uniform}(Y)$ for $Y\subseteq X$ denotes the uniform lottery over $Y$. 
By definition, $\mathrm{RSD}\big((\succeq_i)_{i\in N}\big)$ is ex-post efficient.

\medskip
Now we turn to an alternative, stronger, notion of efficiency for lotteries. Agent $i$'s preference $\succeq_i$  over $X$ induces an incomplete preference over lotteries $\Delta(X)$, called  (first-order) stochastic dominance: $p$ stochastically dominates (SD) $q$ for preference $\succeq_i$ if $p$ puts at least as much weight as $q$ on all upper-contour sets of the preference relation $\succeq_i$. In an abuse of notation, we use $\succeq_i$ to denote the SD dominance relation over lotteries, so that for $p, q\in \Delta(X)$, 
$$p\succeq_i q\quad \Leftrightarrow  \quad \sum_{y\succ_i x} p_y\geq \sum_{y\succ_i x} q_y\quad \forall x\in X.$$ If at least one of these inequalities is strict, then we write $p\succ_i q$.
\begin{definition}
A lottery $p\in \Delta(X)$ is \df{ex-ante efficient} (or stochastic-dominance efficient), if there is no lottery $q\in \Delta(X)$ such that $q\succeq_i p$ for all agents $i\in N$ and $q\succ_j p$ for some $j$.
\end{definition}

Ex-ante efficiency implies its ex-post cousin. Indeed, if $p$ is not ex-post efficient, we can construct a stochastically dominating lottery by moving weight from dominated to dominating outcomes.

\section{Geometry and combinatorics of ex-ante efficiency}\label{sec_welfare}

A natural way to obtain an ex-ante efficient lottery is to represent each agents' ordinal preference by a utility function, and then choose a lottery that maximizes the sum of expected utilities. {We need the converse statement proved by \cite{carroll2010efficiency}: all ex-ante efficient outcomes can be found in this way. 
}

Recall that a function $u_i:\ X\to \R$ is a utility function representing agent $i$'s preference $\succeq_i$ if $u_i(x)\geq u_i(y)$ if and only if $x\succeq_i y$. Equivalently $u_i(x)>u_i(y)$ if $i$ strictly prefers $x$ to $y$ and $u_i(x)=u_i(y)$ if $i$ is indifferent. A collection of $(u_i)_{i\in N}$ is a utilitarian representation of a profile of preferences $(\succeq_i)_{i\in N}$ if $u_i$ represents $\succeq_i$ for each agent $i\in N$.

\begin{theorem}[\cite{carroll2010efficiency}]\label{lm_welfarist}
A lottery $p$ is ex-ante efficient if and only if there are utilitarian representations $(u_i)_{i\in N}$ of agents' preferences such that $p$ is supported on outcomes $x$ with maximal welfare $\sum_{i\in N} u_i(x)$. Formally,
$$p\left(\arg\max_{x\in X} \sum_{i} u_i(x)\right) =1.$$
\end{theorem}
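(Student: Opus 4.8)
The plan is to prove both implications, treating the reverse (``if'') direction as a direct welfare-aggregation argument and the forward (``only if'') direction as a duality argument in the spirit of McLennan's Farkas-type proof. The organizing observation is that a utility function $u_i$ faithfully represents $\succeq_i$ precisely when it is a strictly positive combination (up to an additive constant) of the indicators $\one_U$ of the nonempty upper contour sets $U$ of $\succeq_i$: writing $u_i=\sum_U \lambda_{i,U}\one_U$ with every $\lambda_{i,U}>0$ yields $u_i(x)>u_i(y)$ exactly when $x\succ_i y$ and $u_i(x)=u_i(y)$ exactly when $x\simeq_i y$, since $x\succeq_i y$ holds iff every upper contour set containing $y$ also contains $x$. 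Thus producing the desired utilitarian representations amounts to producing strictly positive weights $\lambda_{i,U}$ on upper contour sets.

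For the ``if'' direction, suppose $p$ is supported on $\arg\max_x\sum_i u_i(x)$ for faithful $(u_i)$, and suppose toward a contradiction that $q\succeq_i p$ for all $i$ with $q\succ_j p$ for some $j$. The key step is the layer-cake identity: if $u_i$ takes distinct values $v_1>\dots>v_{K}$ with upper contour sets $U_1\subsetneq\dots\subsetneq U_{K}=X$, then $\E_q u_i-\E_p u_i=\sum_{k=1}^{K-1}(v_k-v_{k+1})\langle\one_{U_k},q-p\rangle$. Every coefficient $v_k-v_{k+1}$ is strictly positive, and each $\langle \one_{U_k},q-p\rangle\ge 0$ by stochastic dominance, so $\E_q u_i\ge \E_p u_i$; moreover a strict dominance $q\succ_j p$ occurs at some nontrivial upper contour set (the term $U_K=X$ always contributes $0$) and therefore forces $\E_q u_j>\E_p u_j$. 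Summing over agents gives $\sum_i\E_q u_i>\sum_i\E_p u_i$, contradicting that $p$ is supported on the maximizers of $\sum_i u_i$.

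For the ``only if'' direction, I would first reduce ex-ante efficiency to a statement about a polyhedral cone. Let $Z=\{x:p_x=0\}$ and let $T=\{d\in\Re^X:\mathbf 1^\top d=0,\ d_x\ge 0\ \forall x\in Z\}$ be the cone of feasible directions at $p$; by positive homogeneity of stochastic dominance, a dominating lottery exists iff there is $d\in T$ with $\langle\one_U,d\rangle\ge 0$ for every upper contour set $U$ of every agent and at least one inequality strict. Collecting the indicators $\one_U$ into the rows of a matrix $A$, ex-ante efficiency says exactly that there is no $d\in T$ with $Ad\ge 0$ and $Ad\neq 0$. Since $Ad\ge 0$, the condition $Ad\neq 0$ is equivalent to the single strict inequality $\mathbf 1^\top Ad>0$, so the infeasible system is $\mathbf 1^\top A d>0,\ Ad\ge 0,\ d_x\ge0\ (x\in Z),\ \mathbf 1^\top d=0$. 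Applying Motzkin's transposition theorem to this system produces a strictly positive scalar $y_1>0$ for the strict row together with nonnegative multipliers $\mu\ge 0$ for the rows of $A$, nonnegative multipliers $\gamma_x\ge0$ for the sign constraints on $Z$, and a free scalar $\nu$ for $\mathbf 1^\top d=0$, satisfying $A^\top(y_1\mathbf 1+\mu)=-\nu\mathbf 1-\sum_{x\in Z}\gamma_x e_x$.

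Setting $\lambda:=y_1\mathbf 1+\mu$ then yields the representation: every coordinate $\lambda_{i,U}\ge y_1>0$ is strictly positive, so each $u_i:=\sum_U\lambda_{i,U}\one_U$ faithfully represents $\succeq_i$, while the welfare $\sum_i u_i=A^\top\lambda$ equals $-\nu$ on $\supp(p)$ and $-\nu-\gamma_x\le-\nu$ off it, whence $p$ is supported on $\arg\max_x\sum_i u_i(x)$. The main obstacle is precisely obtaining \emph{strictly} positive weights: an ordinary separating-hyperplane or basic Farkas argument delivers only nonnegative, not-all-zero multipliers, and a zero weight on an internal upper contour set would collapse a strict preference into indifference and destroy faithfulness. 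The device of rewriting $Ad\neq 0$ as the single strict inequality $\mathbf 1^\top Ad>0$ is what lets the sharp (Motzkin/Tucker) alternative produce the strictly positive scalar $y_1$, which then forces every coordinate of $\lambda$ to be positive. This mirrors McLennan's Farkas-variant proof; the extension to indifferences, as in Carroll's general-domain version, is automatic here because the entire argument is phrased in terms of upper contour sets rather than strict rankings.
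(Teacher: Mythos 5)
Your proof is correct, but in the harder direction you run the duality the opposite way from the paper, so a brief comparison is in order. The ``if'' direction coincides with the paper's necessity argument in Appendix~\ref{app_welfare}: your layer-cake identity is just the explicit form of the fact that first-order stochastic dominance yields weakly higher expected utility under any representing utility function, with strictness surviving because the gaps $v_k-v_{k+1}$ are strictly positive. For the ``only if'' direction, the paper writes the primal linear system in the utility variables $u_i(x)$ --- with faithfulness hard-coded by the constraints $u_i(x)-u_i(y)\ge 1$ for $x\succ_i y$ and the welfare pinned to zero on $\supp[p]$ --- assumes infeasibility, and uses the Farkas multipliers to manufacture a dominating direction $\alpha$, contradicting ex-ante efficiency. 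You instead take the system in the direction variable $d$ (essentially the system of Lemma~\ref{lm_exante_support_characterization}), note that ex-ante efficiency makes it infeasible, and read the utility weights off the dual. The cost of your orientation is precisely the obstacle you flag: an ordinary Farkas alternative yields only nonnegative, not-all-zero multipliers, while faithfulness of $u_i=\sum_U\lambda_{i,U}\one_U$ requires \emph{every} $\lambda_{i,U}>0$; collapsing $Ad\neq 0$ into the single strict row $\mathbf{1}^\top A d>0$ and invoking Motzkin's transposition theorem is the right fix, since the scalar $y_1>0$ uniformly lifts all coordinates of $\lambda=y_1\mathbf{1}+\mu$. The paper sidesteps this by placing the strict positivity in the primal right-hand side (the $\ge 1$ constraints), at the cost of having to verify on the dual side that the vector $\alpha$ it extracts really is a stochastic-dominance improvement --- the computation culminating in equation~\eqref{eq_alpha_uppercontour}. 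Both routes are complete; yours is closer to McLennan's original Farkas-variant argument, and, as you note, phrasing everything in terms of upper contour sets makes the extension to indifferences automatic in either version.
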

{For reader's convenience, a simple proof  of Theorem~\ref{lm_welfarist} based on a straightforward application of linear programming duality is included in Appendix~\ref{app_welfare}.} 
Theorem~\ref{lm_welfarist} implies several insights regarding the structure of ex-ante efficient allocations {that can also be found in \cite{aziz2015universal}.} 
Denote the support of a lottery $p$ by $\supp [p]=\{x\in X\, :\, p_x>0\}$.

\begin{corollary}[\cite{aziz2015universal}]\label{cor_ex_ante_support}
For any ex-ante efficient lottery $p$, any lottery $q$ such that $\supp [q]\subseteq \supp [p]$ is also ex-ante efficient.
\end{corollary}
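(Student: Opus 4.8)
The plan is to leverage Theorem~\ref{lm_welfarist} directly. Since $p$ is ex-ante efficient, that theorem guarantees the existence of utilitarian representations $(u_i)_{i\in N}$ of the agents' preferences for which $p$ is supported on the welfare-maximizing set; that is, $\supp[p]\subseteq \arg\max_{x\in X}\sum_{i\in N} u_i(x)$. The key observation is that \emph{the same} representations $(u_i)_{i\in N}$ will serve to certify the ex-ante efficiency of $q$.

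First I would record the welfare-maximization property of $p$ in the form $\supp[p]\subseteq M$, where $M=\arg\max_{x\in X}\sum_{i\in N} u_i(x)$ is the set of outcomes achieving maximal total welfare under these fixed utilities. Next I would use the hypothesis $\supp[q]\subseteq \supp[p]$ to chain the inclusions: $\supp[q]\subseteq \supp[p]\subseteq M$. This immediately yields $q(M)=1$, which is precisely the condition in Theorem~\ref{lm_welfarist} for $q$. Applying the ``if'' direction of that theorem to the representations $(u_i)_{i\in N}$ then certifies that $q$ is ex-ante efficient, completing the argument.

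The main point — really the only content — is the recognition that the utilitarian certificate is a property of the profile together with the support, not of the particular weights $p$ places on its support. Once one has a single utility profile witnessing that every outcome in $\supp[p]$ maximizes welfare, any redistribution of probability mass among those same (or fewer) outcomes is automatically supported on welfare maximizers and hence efficient. I do not anticipate a substantive obstacle here; the corollary is essentially an immediate reading of Theorem~\ref{lm_welfarist}, and the entire proof reduces to the inclusion $\supp[q]\subseteq\supp[p]\subseteq\arg\max_x\sum_i u_i(x)$ together with one invocation of the theorem in each direction (the ``only if'' direction to obtain the $u_i$ from $p$, and the ``if'' direction to conclude for $q$).
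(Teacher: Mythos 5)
Your proof is correct and matches the paper's own argument exactly: the paper's one-line justification is that the utilitarian representation obtained from Theorem~\ref{lm_welfarist} for $p$ also certifies $q$, which is precisely your chain $\supp[q]\subseteq\supp[p]\subseteq\arg\max_x\sum_i u_i(x)$ followed by the ``if'' direction of the theorem. No issues.
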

Indeed, the utilitarian representation $(u_i)_{i\in N}$ from Theorem~\ref{lm_welfarist} constructed for $p$ can also be used for $q$.
Corollary~\ref{cor_ex_ante_support} implies that, whether a lottery is ex-ante efficient, is determined by its support.  \cite{abdulkadirouglu2003ordinal} prove a similar result for the housing market. 
The next corollary provides a geometric description of ex-ante efficient lotteries. 
Consider a collection of subsets $\mathcal{X}\big((\succ_i)_{i\in N}\big)\subset 2^X$ such that $S \in \mathcal{X}$ if and only if there is an ex-ante efficient lottery $p$ with $\supp[p]=S$.  
\begin{corollary}[\cite{aziz2015universal}]\label{cor_faces} The set of all ex-ante efficient lotteries coincides with the union of simplices $$\bigcup_{S\in \mathcal{X}\big((\succ_i)_{i\in N}\big)}\Delta(S).$$
\end{corollary}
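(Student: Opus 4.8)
The plan is to establish the claimed set equality by proving two inclusions, each of which follows almost immediately once we recall that, by Corollary~\ref{cor_ex_ante_support}, ex-ante efficiency of a lottery is determined entirely by its support.

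For the inclusion of the set of ex-ante efficient lotteries into $\bigcup_{S\in\mathcal{X}}\Delta(S)$, I would take an arbitrary ex-ante efficient lottery $p$ and set $S=\supp[p]$. By the very definition of $\mathcal{X}\big((\succ_i)_{i\in N}\big)$, the existence of the efficient lottery $p$ with support $S$ witnesses that $S\in\mathcal{X}$. Since $p$ is supported on $S$, we have $p\in\Delta(S)$, and hence $p$ lies in the union.

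For the reverse inclusion, I would fix any $q\in\bigcup_{S\in\mathcal{X}}\Delta(S)$, so that $q\in\Delta(S)$ for some $S\in\mathcal{X}$; equivalently, $\supp[q]\subseteq S$. By the definition of $\mathcal{X}$ there is an ex-ante efficient lottery $p$ with $\supp[p]=S$. The containment $\supp[q]\subseteq\supp[p]$ then lets me invoke Corollary~\ref{cor_ex_ante_support} to conclude that $q$ is itself ex-ante efficient. The one point that deserves attention here, and the only place the argument uses more than unwinding definitions, is that $\Delta(S)$ also contains lotteries whose support is a \emph{proper} subset of $S$; these are precisely the cases covered by Corollary~\ref{cor_ex_ante_support}, which is exactly why no genuine obstacle arises.

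I do not expect a hard step: the entire content of the corollary is carried by the support-determinacy of ex-ante efficiency already recorded in Corollary~\ref{cor_ex_ante_support}. If anything, the only care needed is notational---keeping straight that $\Delta(S)$ denotes all lotteries with support contained in $S$, so the simplices overlap along common faces. This perspective also shows, en passant, that $\mathcal{X}\big((\succ_i)_{i\in N}\big)$ is closed under passing to (nonempty) subsets of its members.
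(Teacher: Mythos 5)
Your proof is correct and follows exactly the route the paper intends: the corollary is stated as an immediate consequence of Corollary~\ref{cor_ex_ante_support} (support-determinacy of ex-ante efficiency), and your two inclusions, including the observation that $\Delta(S)$ contains lotteries supported on proper subsets of $S$, are precisely the unwinding the paper leaves implicit. Nothing further is needed.
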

Corollary~\ref{cor_faces}  admits a geometric interpretation. By Theorem~\ref{lm_welfarist}, ex-ante efficient lotteries are those elements of the simplex $\Delta(X)$  of all lotteries, a polytope, that can be obtained via maximization of welfare, which is a linear functional. Maximizing a linear functional over a polytope, we always obtain a face of the polytope. The sets $\Delta(S)$ from Corollary~\ref{cor_faces} are exactly those faces that can be obtained in this way, as the linear functional varies over the cone of welfare functionals corresponding to utilitarian representations of a given preference profile.

Ex-ante efficiency implies ex-post, so any $S\in \mathcal{X}$ must consist of Pareto optimal outcomes, i.e., $\mathcal{X}\subset 2^{X^*}$. On the other hand, for any Pareto-optimal outcome $x\in X^*$, the degenerate lottery  $\delta_{x}$, placing all weight on $x$, is ex-ante efficient. Thus  $\{x\}\in \mathcal{X}$ for any Pareto optimal $x$. 
In general, whether a set $S$ belongs to $\mathcal{X}$ can be determined via a simple linear system. 
\begin{lemma}\label{lm_exante_support_characterization}
A set $S\subset X^*$ supports an ex-ante efficient lottery if and only if
there is no $\alpha\in \Re^{S}\times \Re_+^{X^*\setminus S}$ such that $\sum_{x\in X^*} \alpha_x=0$, 
\begin{equation}\label{eq_alpha_general}
\sum_{y\succ_i x} \alpha_y\geq 0 \qquad \mbox{for all $i\in N$ and $x\in X^*$}
\end{equation}
and at least one of these inequalities is strict. 
\end{lemma}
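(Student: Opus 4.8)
The plan is to recognize the statement as a direct reformulation of the definition of ex-ante efficiency, phrased in terms of the displacement between two lotteries. By Corollary~\ref{cor_ex_ante_support}, whether a set $S$ supports an ex-ante efficient lottery depends only on $S$, so $S$ supports such a lottery if and only if the uniform lottery $p:=\mathrm{Uniform}(S)$, which has full support on $S$, is itself ex-ante efficient. It therefore suffices to show that $p$ fails to be ex-ante efficient precisely when an $\alpha$ as in the statement exists. The guiding observation is that the displacement $\alpha=q-p$ between $p$ and a competing lottery $q$ converts the stochastic-dominance inequalities into exactly the listed conditions: $\sum_{x\in X^*}\alpha_x=0$ records that $p$ and $q$ are both probability distributions, $\sum_{y\succ_i x}\alpha_y\geq 0$ records $q\succeq_i p$, and one strict inequality records $q\succ_j p$ for some $j$.

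For the forward passage (a dominating $q$ yields such an $\alpha$) I would set $\alpha=q-p$ and read off the conditions. The sign pattern $\alpha\in\Re^{S}\times\Re_+^{X^*\setminus S}$ falls out because $q\geq 0$ while $p$ vanishes off $S$, so $\alpha_x=q_x\geq 0$ for $x\notin S$, whereas on $S$ the coordinate $\alpha_x=q_x-p_x$ is unrestricted. For the converse I would start from a given $\alpha$ and set $q=p+\varepsilon\alpha$ for $\varepsilon>0$ small enough that $q$ remains nonnegative; such $\varepsilon$ exists because $p$ places strictly positive mass on every point of $S$, and $q\in\Delta(X)$ since $\sum_{x\in X^*}\alpha_x=0$. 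All the stochastic-dominance inequalities, and the strict one, are positively homogeneous in $\alpha$, so they survive the scaling; this homogeneity is exactly what lets the unbounded cone condition in the statement match the bounded competitor $q$ in the definition.

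Two reductions carry the real content, and I expect them to be the main obstacle. First, I must argue that a dominating $q$ can be taken supported on $X^*$, so that $\alpha$ lives on $X^*$: if $q$ places weight on a non-Pareto-optimal $y$, I would move that weight up to a Pareto-optimal $z$ with $z\succeq_i y$ for all $i$ and $z\succ_j y$ for some $j$ (such $z\in X^*$ exists by finiteness, since Pareto dominance is a strict partial order). Because upper contour sets are nested, shifting mass from $y$ to a weakly preferred $z$ can only increase the weight each agent places on every upper contour set, so the new lottery still dominates $p$ and retains the strict improvement. Second, I must verify that restricting the inequalities \eqref{eq_alpha_general} to $x\in X^*$ loses nothing: since the distinct upper contour sets of each $\succeq_i$ are linearly ordered by inclusion and $\alpha$ is supported on $X^*$ with $\sum_{x\in X^*}\alpha_x=0$, every cut $\{y\succ_i x\}$ with $x\in X\setminus X^*$ either agrees, within $X^*$, with one induced by a point of $X^*$, or equals all of $X^*$ or the empty set, in which case the associated inequality is the identity $\sum_{x\in X^*}\alpha_x=0$. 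Hence the $X^*$-indexed inequalities are equivalent to the full condition $q\succeq_i p$ on $X$, and the strict inequality transfers in the same way. Assembling these pieces delivers the claimed equivalence.
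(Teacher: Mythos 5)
Your proof is correct and follows essentially the same route as the paper's: reduce via Corollary~\ref{cor_ex_ante_support} to the ex-ante efficiency of the uniform lottery on $S$, then pass between a dominating lottery $q$ and the displacement $\alpha=q-p$ in one direction and $q=p+\varepsilon\alpha$ in the other. The paper's own proof is just the two-line version of this; the two reductions you work out explicitly (restricting a dominating $q$ to be supported on $X^*$, and checking that the inequalities indexed by $x\in X^*$ capture the full stochastic-dominance condition over $X$) are left implicit there, and your treatment of them is sound.
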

{A similar system for checking ex-ante efficiency of a given lottery is contained in \citep{aziz2015universal}. The problem they study is equivalent because, by Corollary~\ref{cor_ex_ante_support}, checking that $S$ supports an ex-ante efficient lottery is equivalent to checking ex-ante efficiency of the uniform lottery over $S$. }
\begin{proof}[Proof of Lemma~\ref{lm_exante_support_characterization}]
By Corollary~\ref{cor_ex_ante_support}, a set $S$ supports an ex-ante efficient lottery if and only if the uniform lottery $p$ on $S$ is ex-ante efficient. If $p$ is dominated by some $q$, we obtain $\alpha$ solving the linear system by setting $\alpha=q-p$. Conversely, if $\alpha$ is a solution, $q=p+\varepsilon\alpha$ is a feasible lottery, for small enough positive $\varepsilon$, and $q\succeq_i p$ for all agents $i$ while $q\succ_i p$ for some $i$. 
\end{proof}
Lemma~\ref{lm_exante_support_characterization} admits a combinatorial reformulation.
\begin{lemma}\label{lm_ala_Fishburn}
A set $S\subset X^*$ supports an ex-ante efficient lottery if and only if for any $K\geq 1$ we cannot find a pair of sequences $(a_k)_{k=1,\ldots, K}$ consisting 
of outcomes from $X^*$ and $(b_k)_{k=1,\ldots, K}$ consisting 
of outcomes from $S$ and  such that for any outcome $x\in X^*$  and all agents $i$
\begin{equation}\label{eq_Fishburn}
\left|\{k\,:\, a_k\succ_i x \}  \right|\geq \left|\{k\,:\, b_k\succ_i x \}  \right|
\end{equation}
and for at least one $i$ and $x$, the inequality is strict.
Repetitions are allowed, but each outcome enters at most one of the sequences and is repeated there  at most 
$\exp\left(\frac{1}{2}|X^*|\cdot\ln |X^*|\right)$ times.
\end{lemma}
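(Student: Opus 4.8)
The plan is to deduce this combinatorial statement directly from the linear-algebraic characterization in Lemma~\ref{lm_exante_support_characterization}, translating the vector $\alpha$ into the two sequences $(a_k)$ and $(b_k)$ via its sign decomposition. Given any $\alpha$ as in Lemma~\ref{lm_exante_support_characterization}, write $\alpha = \alpha^+ - \alpha^-$ with $\alpha^+_x = \max(\alpha_x,0)$ and $\alpha^-_x = \max(-\alpha_x,0)$. The sign constraint $\alpha \in \Re^{S}\times\Re_+^{X^*\setminus S}$ forces $\alpha^-$ to be supported on $S$, while $\alpha^+$ may be supported on all of $X^*$; the balance condition $\sum_{x\in X^*}\alpha_x = 0$ says precisely that $\alpha^+$ and $\alpha^-$ have equal total mass. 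Reading $\alpha^+$ (resp.\ $\alpha^-$) as multiplicities gives exactly the multiset of entries of $(a_k)$ drawn from $X^*$ (resp.\ of $(b_k)$ drawn from $S$), and the two sequences then share a common length $K = \sum_x\alpha^+_x = \sum_x\alpha^-_x$. Under this dictionary one has $|\{k:a_k\succ_i x\}| = \sum_{y\succ_i x}\alpha^+_y$ and $|\{k:b_k\succ_i x\}| = \sum_{y\succ_i x}\alpha^-_y$, so inequality~\eqref{eq_Fishburn} holds for all $i,x$ with a strict instance if and only if $\sum_{y\succ_i x}\alpha_y \geq 0$ for all $i,x$ with a strict instance.

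This dictionary immediately yields the easy direction: from sequences as in the statement one recovers an admissible $\alpha$ by setting $\alpha_y = |\{k:a_k=y\}| - |\{k:b_k=y\}|$, which is well-signed because each outcome enters at most one sequence, and nonnegative off $S$ because the $b_k$ all lie in $S$. By Lemma~\ref{lm_exante_support_characterization} this certifies that $S$ fails to support an ex-ante efficient lottery. For the converse I must upgrade an arbitrary real solution $\alpha$ to an \emph{integral} one whose entries are bounded by $\exp(\tfrac12|X^*|\ln|X^*|)=|X^*|^{|X^*|/2}$, since integrality is what lets me read $\alpha$ as multiplicities and the entry bound is what caps the repetitions. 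First I normalize: the system in Lemma~\ref{lm_exante_support_characterization} is a homogeneous cone together with one strict inequality, say at $(i_0,x_0)$, so rescaling I may assume $\sum_{y\succ_{i_0}x_0}\alpha_y = 1$. The admissible $\alpha$ then form a nonempty rational polyhedron $P\subset\Re^{X^*}$ whose defining inequalities use the integer matrix built from indicators of upper-contour sets, the all-ones row (for $\sum_x\alpha_x=0$), the coordinate rows $-e_x$ (for $x\in X^*\setminus S$), and the normalizing row, all with right-hand sides in $\{0,1\}$.

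The key step is to select a vertex-like point of $P$ and bound its denominators. Passing to a minimal face of $P$ and, if necessary, adjoining coordinate equations $\alpha_x=0$, I obtain a point $\alpha^*$ determined by a square nonsingular integer system $M\alpha^*=c$ whose rows lie among the defining rows above together with standard basis vectors, and with $c$ having entries in $\{0,1\}$. By Cramer's rule $\alpha^*_x = \det(M_x)/\det M$, so $\beta := |\det M|\,\alpha^*$ is integral with $\beta_x = \mathrm{sign}(\det M)\det(M_x)$; positive scaling preserves every cone inequality and the strict one. Every column of each $M_x$ has entries in $\{-1,0,1\}$ and hence Euclidean norm at most $\sqrt{|X^*|}$, so Hadamard's inequality gives $|\beta_x| = |\det M_x| \leq |X^*|^{|X^*|/2}$. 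Feeding this integral $\beta$ back through the sign-decomposition dictionary produces the required sequences: the multiplicities equal $|\beta_x|\leq|X^*|^{|X^*|/2}$, each outcome enters at most one sequence because $\beta_x$ has a single sign, and $(b_k)$ stays inside $S$ because $\beta_x\geq 0$ off $S$.

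I expect the main obstacle to be this last step, namely producing an integral solution meeting the precise Hadamard bound. The passage between $\alpha$ and the sequences is essentially bookkeeping, but two points need care: (i) handling a possible lineality space of $P$, so that a genuinely basic point cut out by a nonsingular square system exists—which is exactly why I adjoin coordinate equations—and (ii) checking that every matrix entering the determinant estimate really has $\{-1,0,1\}$ columns, so that the exponent $\tfrac12|X^*|\ln|X^*|$ emerges exactly rather than with a worse constant.
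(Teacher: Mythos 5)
Your proposal is correct and follows essentially the same route as the paper: the sign-decomposition dictionary between $\alpha$ and the two sequences is exactly the paper's construction, and the integrality-plus-Hadamard bound is the paper's key step, which it factors out as Lemma~\ref{lm_integrality} in Appendix~\ref{app_integral} (proved by adding box constraints $-1\leq \alpha_x\leq 1$ and taking an extreme point, then applying Cramer and Hadamard). Your inline derivation via normalizing the strict inequality and passing to a minimal face with adjoined coordinate equations is a minor technical variant of that same argument and yields the same bound.
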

The lemma is proved in Section~\ref{sec_Fishburn_proof} via integer linear programming tools from Appendix~\ref{app_integral}. A related condition was found by~\cite{fishburn1969preferences}. Pick a utilitarian representation of a given preference profile and order the outcomes by their welfare. \cite{fishburn1969preferences} calls partial orders that can be obtained if we are free to choose the utilitarian representation ``summable'' and characterizes all summable orders. By Lemma~\ref{lm_welfarist}, a set $S$ supports an ex-ante efficient lottery if and only if $S$ is the top indifference class for some summable partial order. This connection explains the similarity between the condition from Lemma~\ref{lm_ala_Fishburn} and the condition (4b) of \cite{fishburn1969preferences}. {A characterization of ex-ante efficiency via multi-sets of outcomes similar to Lermma~\ref{lm_ala_Fishburn} was obtained by \cite{aziz2014characterization}.
In contrast to Lemma~\ref{lm_ala_Fishburn}, \cite{fishburn1969preferences} and \cite{aziz2014characterization} did not bound the number of repetitions.}

\section{General results on the equivalence of ex-ante and ex-post efficiency}\label{sec_criteria_of_equivalence}
{Insights about the structure of ex-ante efficient allocations from Section~\ref{sec_welfare} give general criteria for ex-ante and ex-post efficiency to coincide at a given preference profile. In subsequent sections these criteria will be applied and refined for particular  domains of preferences.}
\begin{criterion}\label{crit_main}
For a profile of preferences $(\succeq_i)_{i\in N}$ the following assertions are equivalent:
\begin{itemize}
    \item[(a)] The set of ex-ante efficient lotteries coincides with the set of ex-post efficient lotteries.
     \item[(b)] The set of ex-ante efficient lotteries is convex.
    
      \item[(c)] The uniform lottery over the set $X^*$ of Pareto-optimal outcomes is ex-ante efficient.
   
    \item[(d)] There is a utilitarian representation  $(u_i)_{i\in N}$ such that all Pareto-optimal utilities $x$ have the same welfare $\sum_{i} u_i(x)$. 
\item[(e)] There is no $\alpha\in \Re^{X^*}$ such that $\sum_{x\in X^*} \alpha_x=0$, 
\begin{equation}\label{eq_criterion_alpha}
\sum_{y\succ_i x} \alpha_y\geq 0 \qquad \mbox{for all $i\in N$ and $x\in X^*$}
\end{equation}
and at least one of these inequalities is strict.
\item[(f)] We cannot find a pair of sequences $(a_k)_{k=1,\ldots, K}$  and $(b_k)_{k=1,\ldots, K}$ consisting 
of outcomes from $X^*$ such that each outcome is repeated at most
$\exp\left(\frac{1}{2}|X^*|\cdot\ln |X^*|\right)$ times and 
\begin{equation}\label{eq_Fishburn_criterion}
\left|\{k\,:\, a_k\succ_i x \}  \right|\geq \left|\{k\,:\, b_k\succ_i x \}  \right|
\end{equation}
for any $x\in X^*$  and $i$ and for at least one combination, the inequality is strict.
\end{itemize}
\end{criterion}
Criterion $(b)$ was alluded to in the introduction. If $p$ and $p'$ are ex-ante efficient, then agents may disagree on how the two lotteries should be ranked. Fairness would then suggest the lottery $\frac{1}{2}p+\frac{1}{2}p'$ as a compromise. Criterion $(b)$ means that the ex-ante efficiency of such compromise lotteries characterize the environments in which ex-ante and ex-post efficiency coincide. 
\begin{proof}
Let us demonstrate that $(a)\Leftrightarrow (b)$. If set of ex-post efficient lotteries $\Delta(X^*)$ coincides with the set of ex-ante efficient lotteries, then the latter is convex since $\Delta(X^*)$ is convex, i.e., $(a)\Rightarrow (b)$.
The set of ex-ante efficient lotteries contains a point mass at $x$ for each Pareto optimal $x$. Hence, if the  set of ex-ante efficient lotteries is convex, it contains any convex combination of such distributions, i.e., it contains $\Delta(X^*)$ and thus equals $\Delta(X^*)$. We deduce that  $(b)\Rightarrow (a)$. The equivalence $(a)\Leftrightarrow (b)$ is established.

The equivalence $(a)\Leftrightarrow (c)$ follows from Corollary~\ref{cor_ex_ante_support}. By this corollary, the set of ex-post efficient lotteries $\Delta(X^*)$ coincides with the set of ex-ante efficient lotteries if and only if $X^*$ enters  the decomposition from Corollary~\ref{cor_ex_ante_support}, i.e., $X_k^*=X^*$ for some $k$. The set $X^*$ enters the decomposition if and only if there is an ex-ante efficient lottery $p$ with $\supp p=X^*$. Choosing $p$ to be the uniform lottery over $X^*$, we conclude that $(a)\Leftrightarrow(c)$. 

Now we check that $(c)\Leftrightarrow (d)$. 
First, let us show that $(c)\Rightarrow (c)$. If the uniform lottery $p$ over $X^*$ is ex-ante efficient, then, by Lemma~\ref{lm_welfarist}, there is a utilitarian representation $(u_i)_{i\in N}$ of preferences such that $p$ is supported on outcomes $x$ maximizing the welfare $\sum_i u_i(x)$. In particular, for all the outcomes from the support, the welfare is the same.
Hence, $(c)$ implies $(d)$. To show that $(d)\Rightarrow(c)$, 
we start with a utilitarian representation $(u_i)_{i\in N}$, where all the Pareto optimal outcomes $x\in X^*$ have the same welfare of $w$ and show that the uniform lottery $p$ over $X^*$ is welfare-maximizing, i.e., the condition of Lemma~\ref{lm_welfarist} holds.  For any outcome $y\in X$, its welfare cannot be higher than $w$ as this would contradict Pareto optimality of $X^*$. Hence, any lottery with welfare of $w$ is a welfare maximizer. Since the uniform lottery has welfare $w$, we conclude that it maximizes welfare and so is ex-ante efficient.

The equivalence $(c)\Leftrightarrow (e)$ is an immediate corollary of Lemma~\ref{lm_exante_support_characterization} with $S=X^*$.

Similarly, the equivalence $(c)\Leftrightarrow (f)$ follows from Lemma~\ref{lm_ala_Fishburn}.
\end{proof}
\subsubsection{Applications} {We illustrate a use of criterion $(e)$ and recover the result by \cite{aziz2018tradeoff} that ex-ante and ex-post efficiency coincide in any preference domain provided that the number of agents and/or Pareto optimal outcomes are small enough.}
\begin{proposition}[\cite{aziz2018tradeoff}]\label{prop_small_profiles_general}
The sets of ex-ante and ex-post efficient lotteries coincides for any preference profile $(\succeq_i)_{i\in N}$
\begin{enumerate}
    \item with $|X^*|\leq 3$ Pareto optimal outcomes.
    \item with $|N|\leq 2$  agents.
    \item with $|N|=3$ agents and $|X^*|\leq 5$ Pareto optimal outcomes.
\end{enumerate}
\end{proposition}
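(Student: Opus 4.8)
The plan is to verify assertion~$(e)$ of Criterion~\ref{crit_main} in each case: it suffices to show that the linear system $\sum_{x\in X^*}\alpha_x=0$ together with $\sum_{y\succ_i x}\alpha_y\ge 0$ for all $i,x$ admits no solution in which some inequality is strict. Throughout I would work only with the preferences restricted to $X^*$ and repeatedly use the following consequence of Pareto optimality: for any two outcomes $a,b\in X^*$, either all agents are indifferent between them, or some agent strictly prefers $a$ while some other strictly prefers $b$ (otherwise one would weakly dominate the other). A bad $\alpha$ is necessarily nonzero, so by the normalization it has both a strictly positive and a strictly negative coordinate.

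For part~(2) I would exploit that on $X^*$ two agents must have identical indifference classes $D_1,\dots,D_m$ and exactly reversed strict orders: if $x\succ_1 y$ with both Pareto optimal then $y\succ_2 x$. Writing $s_j=\sum_{x\in D_j}\alpha_x$, agent~$1$ contributes the constraints that all top prefix-sums $s_1+\dots+s_\ell$ are nonnegative and agent~$2$ that all bottom suffix-sums $s_\ell+\dots+s_m$ are nonnegative. Since $\sum_j s_j=0$, the agent-$1$ constraint $s_1+\dots+s_{m-1}=-s_m\ge 0$ and the agent-$2$ constraint $s_m\ge 0$ force $s_m=0$; peeling off the last class and inducting gives $s_j=0$ for every $j$. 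As every upper-contour sum is a partial sum of the $s_j$, all inequalities are equalities and no bad $\alpha$ exists; the one-agent case is trivial because then $X^*$ is a single indifference class.

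For part~(1) with $|X^*|\le 3$, I would argue by contradiction from a bad $\alpha$. The strict inequality comes from an agent $i_0$ whose strict upper-contour set $U_0\subsetneq X^*$ satisfies $\sum_{y\in U_0}\alpha_y>0$, so the complementary lower set has strictly negative $\alpha$-sum; here $|U_0|\in\{1,2\}$. If $U_0=\{a\}$, so $\alpha_b+\alpha_c<0$, then Pareto optimality of $b$ and of $c$ produces agents strictly preferring $b$ (resp.\ $c$) to $a$; since the upper-contour set $\{b,c\}$ would give the impossible $\alpha_b+\alpha_c\ge 0$, each such upper-contour set must be the singleton, pinning down $\alpha_b\ge 0$ and $\alpha_c\ge 0$ and contradicting $\alpha_b+\alpha_c<0$. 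The pair case $U_0=\{a,b\}$ (so $\alpha_c<0$) is handled symmetrically, yielding $\alpha_a\le 0$ and $\alpha_b\le 0$ against $\alpha_a+\alpha_b>0$. This finishes part~(1), and $|X^*|\le 2$ is a degenerate instance of the same reasoning.

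The main obstacle is part~(3), the tight case $|N|=3$ and $|X^*|\le 5$, since Example~\ref{ex_2Condorcet} shows equivalence fails once $|X^*|=6$. I would again start from a bad $\alpha$ and analyze the partition of $X^*$ into its positive, zero, and negative coordinates together with the tournament on $X^*$ recording, for each pair of Pareto-optimal outcomes, which agents strictly prefer which. The constraints $\sum_{y\succ_i x}\alpha_y\ge 0$ with a strict instance should be shown to force the kind of interlocking cyclic disagreement among the three agents exhibited in Example~\ref{ex_2Condorcet}, and the crux is a combinatorial count showing that such a configuration cannot be realized by three weak orders on only five outcomes. This step requires an extended case analysis (alternatively, one may run the combinatorial criterion~$(f)$, whose bounded repetitions make the search finite), and it is where essentially all of the work lies.
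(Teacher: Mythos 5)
Your reduction to Criterion~\ref{crit_main}~(e) is the right starting point, and your arguments for parts~(1) and~(2) are correct, though routed a bit differently from the paper. For~(2) the paper simply observes that on $X^*$ every strict upper contour set of agent~$1$ is a strict lower contour set of agent~$2$, so a strictly positive upper-contour sum for one agent immediately forces a strictly negative one for the other; your prefix/suffix induction over the common indifference classes proves the same thing with more machinery. For~(1) the paper normalizes so that $\alpha_x\neq 0$ for every $x$, notes that with at most three outcomes some sign class of $\alpha$ is a singleton, kills the ``one negative coordinate'' case via Pareto optimality, and disposes of the ``one positive coordinate'' case by passing to the reversed profile (Lemma~\ref{lm_dual}); your direct case analysis on the shape of $U_0$ is a valid substitute. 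These differences are cosmetic.

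Part~(3), however, is a genuine gap: you describe a target (``a combinatorial count showing that such a configuration cannot be realized by three weak orders on only five outcomes'') without supplying the argument, and the fallback of running criterion~(f) is a finite search that you have neither performed nor precisely set up. Since you yourself identify this as where all the work lies, the proposal does not prove the proposition. For comparison, the paper's argument is short and reuses your own tools. Normalize so that $X=X^*$, $\alpha_x\neq 0$ for all $x$, and no two outcomes are ranked indifferent by all three agents. Since $|X^*|\le 5$, one of the two sign classes of $\alpha$ has at most two elements; by Lemma~\ref{lm_dual} one may assume it is the negative class, and the part-(1) argument shows it has exactly two elements $b_1,b_2$ (a single negative coordinate is impossible). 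Each agent's bottom indifference class has a nonpositive $\alpha$-sum (its complement is a strict upper contour set), hence contains $b_1$ or $b_2$; by pigeonhole some $b\in\{b_1,b_2\}$ lies in the bottom class of two of the three agents. Pareto optimality of $b$ then forces $b$ to be the \emph{unique} top outcome of the third agent (uniqueness uses the no-universal-indifference normalization), so that agent's upper-contour constraint one level below the top reads $\alpha_b\ge 0$, contradicting $\alpha_b<0$. Something of this kind is what your part~(3) is missing.
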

\cite{aziz2018tradeoff} also demonstrated that the result is tight, namely, for any pair of numbers $|N|$ and $|X^*|$ that do not match any of the above three cases, there is a profile of preferences where ex-ante and ex-post efficiency do not coincide. Such profiles can also be found in Corollary~\ref{cor_tightness} of Section~\ref{sec:socialchoice}.

The proof of Proposition~\ref{prop_small_profiles_general} uses the following invariance of the set of preference profiles, where ex-ante efficiency is not equivalent to ex-post. Call a profile $(\succeq_i')_{i\in N}$ a reversal of $(\succeq_i)_{i\in N}$ if each agent preferences are reversed, i.e., $x\succ_i' y$ if and only if $x\prec_i y$.
\begin{lemma}\label{lm_dual}
Consider a profile of preferences $(\succeq_i)_{i\in N}$ such that all outcomes are Pareto optimal. Then the sets of ex-ante and ex-post efficient lotteries are distinct for $(\succeq_i)_{i\in N}$ if and only if they are distinct for its reversal $(\succeq_i')_{i\in N}$.
\end{lemma}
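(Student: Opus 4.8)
The plan is to reduce the statement to the linear-algebraic criterion~(e) of Criterion~\ref{crit_main} and then to exhibit the explicit involution $\alpha\mapsto-\alpha$ between the solution sets of the two systems. First I would record a preliminary observation: if every outcome is Pareto optimal for $(\succeq_i)_{i\in N}$, then every outcome is Pareto optimal for the reversal as well. Indeed, $x$ is Pareto dominated in the reversal precisely when there is a $y$ with $x\succeq_i y$ for all $i$ and $x\succ_j y$ for some $j$, i.e.\ precisely when $x$ Pareto dominates some $y$ in the original profile; and this never happens when all outcomes are Pareto optimal. Hence $X^*=X$ for both profiles and the ex-post efficient set equals $\Delta(X)$ in each case. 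By the equivalence (a)$\Leftrightarrow$(e) of Criterion~\ref{crit_main} applied to each profile, the ex-ante and ex-post efficient sets are distinct for a profile if and only if the system~(e), taken over the full set $X$, admits a nontrivial solution.

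The heart of the argument is the identity, valid for any $\alpha$ with $\sum_{x\in X}\alpha_x=0$ and $\alpha':=-\alpha$,
$$\sum_{y\succ_i' x}\alpha'_y \;=\; \sum_{y\succeq_i x}\alpha_y \quad\text{for every agent } i \text{ and outcome } x\in X,$$
which holds because $y\succ_i' x\Leftrightarrow x\succ_i y$ and the total sum of $\alpha$ vanishes. Thus the reversal's \emph{strict} upper-contour sums at $\alpha'$ are exactly the original's \emph{weak} upper-contour sums at $\alpha$. I would then show that the original constraints propagate to all weak upper-contour sets: for any $i$ and $x$, the set $\{y:y\succeq_i x\}$ either equals all of $X$ (so its $\alpha$-sum is $0$) or coincides with the strict upper-contour set $\{y:y\succ_i z\}$ for a $\succeq_i$-maximal element $z$ of $\{y:x\succ_i y\}$. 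Either way, nonnegativity of the strict upper-contour sums together with $\sum_x\alpha_x=0$ forces $\sum_{y\succeq_i x}\alpha_y\geq0$, which via the identity is precisely the feasibility of $\alpha'$ for the reversal.

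For strictness, a strict inequality $\sum_{y\succ_i x}\alpha_y>0$ in the original picks out a nonempty strict upper-contour set which is itself the weak upper-contour set $\{y:y\succeq_i w\}$ of its $\succeq_i$-minimal element $w$; hence $\sum_{y\succeq_i w}\alpha_y=\sum_{y\succ_i' w}\alpha'_y>0$ is a strict reversal inequality. This shows a nontrivial solution for $(\succeq_i)_{i\in N}$ yields one for the reversal. Since reversing twice returns the original profile (and both profiles have all outcomes Pareto optimal), the same argument in the other direction gives the converse, completing the equivalence.

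I expect the main obstacle to be the careful bookkeeping distinguishing strict from weak upper-contour sets: specifically, verifying that the original constraints (on strict upper-contour sets) indeed propagate to every weak upper-contour set, and that a strict inequality is preserved under $\alpha\mapsto-\alpha$. Once the identity above is in hand and these contour-set identifications are justified, the remainder is a direct substitution.
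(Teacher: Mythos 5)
Your proof is correct and takes essentially the same route as the paper: both reduce the statement to Criterion~\ref{crit_main}~(e) and observe that $\alpha\mapsto-\alpha$ is an involution between the two solution sets. The only difference is that you spell out the weak/strict upper-contour bookkeeping (via $\sum_{y\succ_i' x}(-\alpha)_y=\sum_{y\succeq_i x}\alpha_y$ and $\sum_{x}\alpha_x=0$) that the paper leaves implicit, and this verification is carried out correctly.
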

\begin{proof}[Proof of Lemma~\ref{lm_dual}:]
Since all outcomes are Pareto optimal for $(\succeq_i)_{i\in N}$, whenever there is an agent strictly preferring $x$ to $y$, there is also one strictly preferring $y$ to $x$. Hence, all outcomes are Pareto optimal for the reversal $(\succeq_i')_{i\in N}$ as well. By criterion $(e)$, ex-ante and ex-post efficiency are distinct for $(\succeq_i)_{i\in N}$ if and only if we can find $\alpha\in\R^{X^*}$ solving the system~\eqref{eq_criterion_alpha}. Whenever such $\alpha$ exists, $\alpha'=-\alpha$ solves the corresponding system for the reversal $(\succeq_i')_{i\in N}$. The two systems admit a solution simultaneously, and  thus ex-ante and ex-post efficiency are distinct for the two profiles simultaneously as well. 
\end{proof}

\begin{proof}[Proof of Proposition~\ref{prop_small_profiles_general}]
We prove each of the claims separately:\\
\smallskip
\noindent (1) Towards a contradiction, assume that there is a profile with at most $3$ Pareto optimal outcomes, where ex-ante and ex-post efficiency are distinct. Hence, we can find  $\alpha\in \R^{X^*}$ solving the system from criterion~$(e)$. Possibly eliminating some outcomes, we can assume that, without loss of generality, $\alpha_x\ne 0$ for any outcome $x\in X$ and $X=X^*$. As $|X^*|\leq 3 $, the vector $\alpha$ has either only one negative component or only one positive (or both).
 
First, consider the case of only one $b\in X^*$ with $\alpha_b<0$. 
By~\eqref{eq_criterion_alpha}, the outcome $b$ must belong to the lowest indifference class of each agent, i.e., any outcome is weakly preferred to $b$ by all the agents. The agent $i$  with strict  inequality in~\eqref{eq_criterion_alpha} has at least two indifference classes and so there is an outcome strictly preferred by $i$ to~$b$. We obtain a contradiction with Pareto optimality of $b$. Now, consider the case  of only one $a\in X^*$ with $\alpha_a>0$. In the dual profile  from Lemma~\ref{lm_dual}, $a$ will be the only outcome with the negative weight and we again get a contradiction

\smallskip    
\noindent (2)  Towards a contradiction, assume that there exists a profile of preferences $(\succeq_i)_{i\in N}$ with  $N=\{1,2\}$ such that ex-ante and ex-post efficiency are distinct and let $\alpha$ be the corresponding solution to the system~\eqref{eq_criterion_alpha} from criterion~$(e)$.
Without loss of generality, all outcomes are Pareto optimal and, hence, the preferences of agents $1$ and $2$ must be opposite, i.e., $x\succ_1 y$ whenever $x\prec_2 y$. By the criterion, we know that there is an agent $i$, say $i=1$, and an outcome $x$ such that 
$\sum_{y\succ_1 x } \alpha_y>0$. Since an upper contour of agent $1$'s preference is a lower contour of agent $2$'s preference and $\sum_{y\in X} \alpha_y=0$, we conclude that there is $x'$ such that $\sum_{y\succ_2 x' } \alpha_y<0$, which contradicts the assumption that $\alpha$ solves~\eqref{eq_criterion_alpha}.  

\smallskip
\noindent (3) Towards a contradiction, assume that there is a profile $(\succeq_i)_{i\in N}$ with $N=\{1,2,3\}$, $|X^*|\leq 5$, and such that ex-ante and ex-post efficiency are distinct. Consider  a solution $\alpha$ to the system~\eqref{eq_criterion_alpha} from criterion~$(e)$. Without loss of generality, all outcomes are Pareto optimal, $\alpha_x\ne 0$ for all  $x$, and there are no distinct outcomes  such that all agents are indifferent between them. 
Since there are at most $5$ outcomes, there are either at most two outcomes $x$ with $\alpha_x<0$ or at most two with $\alpha_x>0$ (or both). 

Consider the first scenario of at most two  outcomes $x$ with $\alpha_x<0$. In the proof of item~(1), we saw that there must be at least two such outcomes. Thus there are exactly two of them  which we denote by $b_1$ and $b_2$. By~\eqref{eq_criterion_alpha}, each agent's least preferred indifference class must contain at least one of $b_1$ or $b_2$. Let  $b\in\{b_1,b_2\}$ be the outcome appearing in the least-preferred class of two agents, say, agents $i\in \{1,2\}$. To be Pareto optimal, $b$ must belong to the top indifference class of the remaining agent~$3$. Let us demonstrate that the latter indifference class must be a singleton. Indeed, if it contains some other outcome $a$, then, since $b$ cannot be  dominated by $a$, the outcome $a$ must belong to the  least-preferred indifference classes of agents~$1$ and~$2$, which is ruled out by the assumption that all agents cannot be indifferent between a pair of outcomes. We conclude that the outcome $b$ with $\alpha_b<0$ is the only top-ranked outcome for agent~$3$ which contradicts the inequality~\eqref{eq_criterion_alpha} requiring that all upper contours have non-negative sum of $\alpha$-weights.

In the second scenario, there are at most two  outcomes $x$ with $\alpha_x>0$. In the dual profile from Lemma~\ref{lm_dual}, the same two outcomes get strictly negative weight and the above argument leads to contradiction.
\end{proof}

Criteria $(d)$ and $(e)$ have the form of a linear system, and so there is an efficient algorithm for checking whether ex-ante and ex-post efficiency coincide at a given preference profile. We spell out the linear system for~$(e)$: 
\begin{enumerate}
    \item First, one precomputes the set of Pareto optimal outcomes $X^*$: an outcome $x$ belongs to $X^*$ if for any other $y\in X$, either all agents are indifferent between $x$ and $y$ or there is an agent $i$ who prefers $x$ to $y$. Enumerating all the combinations $x$, $y$, and $i$ requires a polynomial number $O(|X|^2\cdot |N|)$ of operations.
    \item The linear system $(e)$ is homogeneous. Hence, the existence of a solution $\alpha$ such that $\sum_{y\succ_i x} \alpha_y>0$ for some $i=i^*$ and $x=x^*$ is equivalent to the existence of a solution such that $\sum_{y\succ_{i^*} x^*} \alpha_y\geq 1$.
    Thus the condition $(e)$ is equivalent to infeasibility of a family of linear systems indexed by $i^*\in N$ and $x^*\in X^*$:
    $$\left\{\begin{array}{ccc}\sum_{x\in X^*} \alpha_x&=0 & \\ 
\sum_{y\succ_i x} \alpha_y,&\geq 0 & i\in N,\, x\in X^*\\
\sum_{y\succ_{i} x} \alpha_y,&\geq 1 &  i=i^*,\, x=x^*
\end{array}\right.$$
\end{enumerate}
We conclude that checking that ex-ante efficiency coincides with ex-post efficiency boils down to checking the feasibility of $O(|N|\cdot|X^*|)$ linear systems of dimension $O(|X^*|)$, with $O(|N|\cdot|X^*|)$ constraints.
\begin{corollary}\label{cor_algo}
Given a profile of preferences $(\succeq_i)_{i\in N}$, whether or not the set of ex-ante efficient lotteries coincides with the set of ex-post efficient lotteries can be decided in polynomial time.
\end{corollary}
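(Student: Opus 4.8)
The plan is to treat this as a direct consequence of Criterion~\ref{crit_main}, which already reduces the qualitative question of coincidence to the (in)feasibility of an explicit linear system; the only work left is to account for the strict-inequality clause and to invoke polynomial-time solvability of linear programming. By item~$(e)$, the sets of ex-ante and ex-post efficient lotteries fail to coincide precisely when there exists $\alpha\in\R^{X^*}$ with $\sum_{x\in X^*}\alpha_x=0$ satisfying the inequalities~\eqref{eq_criterion_alpha}, at least one of which is strict. So the decision problem is exactly a feasibility test for this system, and it suffices to carry it out in polynomial time. The first step I would take is to compute $X^*$: an outcome $x$ is Pareto optimal iff for every other $y$ either all agents are indifferent between $x$ and $y$ or some agent strictly prefers $x$, and scanning all triples $(x,y,i)$ costs $O(|X|^2\cdot|N|)$ comparisons.

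Next I would dispose of the strict-inequality requirement, which is the one point where a naive reading would stumble, since the open condition ``$>0$'' is not directly a linear-programming feasibility question. Here I would use that the system in $(e)$ is homogeneous, so its solution set is a cone: a solution with $\sum_{y\succ_i x}\alpha_y>0$ for a fixed pair $(i^*,x^*)$ exists if and only if a solution with $\sum_{y\succ_{i^*} x^*}\alpha_y\geq 1$ exists, by scaling. Consequently condition $(e)$ is violated iff at least one member of the finite family of systems indexed by $(i^*,x^*)\in N\times X^*$ is feasible, where each system comprises the equality $\sum_{x}\alpha_x=0$, the $O(|N|\cdot|X^*|)$ inequalities of~\eqref{eq_criterion_alpha}, and the single normalization $\sum_{y\succ_{i^*} x^*}\alpha_y\geq 1$. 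Each lives in $O(|X^*|)$ variables with coefficients in $\{0,\pm 1\}$, so every system has size polynomial in the input.

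Finally I would appeal to the fact that feasibility of a finite system of linear inequalities with rational coefficients is decidable in time polynomial in the bit-length of the data, for instance via the ellipsoid method or an interior-point method. Running this test on each of the $O(|N|\cdot|X^*|)$ systems and taking the disjunction of the outcomes decides whether the two efficiency notions differ, and hence whether they coincide, in polynomial time. I do not anticipate a genuine obstacle: all of the mathematical content is already carried by Criterion~\ref{crit_main}, and the only step requiring care is the homogeneity argument of the preceding paragraph, which converts the single open inequality into a polynomially sized collection of closed feasibility problems that standard LP algorithms can handle.
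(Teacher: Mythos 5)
Your proposal is correct and follows essentially the same route as the paper: precompute $X^*$ in $O(|X|^2\cdot|N|)$ comparisons, use the homogeneity of the system in Criterion~\ref{crit_main}~(e) to replace the strict inequality by a normalization $\sum_{y\succ_{i^*}x^*}\alpha_y\geq 1$ for each pair $(i^*,x^*)$, and test feasibility of the resulting $O(|N|\cdot|X^*|)$ polynomially sized linear systems with a polynomial-time LP algorithm. No gaps.
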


\section{Dichotomous preferences. }\label{sec:dichotomous}

We consider preference profiles $(\succeq_i)_{i\in N}$ of dichotomous preferences: each agent $i\in N$ regards a subset of the outcomes $X$ as acceptable, but makes no distinction among acceptable outcomes (each $\succeq_i$ features only two indifference classes). 
{The following result by \cite{duddy2015fair} refines Proposition~\ref{prop_small_profiles_general} in the dichotomous domain.}
\begin{proposition}[\cite{duddy2015fair}]\label{prop:basicdichotomous}
If $|N|\leq 4$, a lottery is ex-ante efficient if and only if it is ex-post efficient. This is not true when $|N|\geq 5$.
\end{proposition}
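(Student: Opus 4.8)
The plan is to translate everything into a statement about the coalitions of approving agents and then invoke the welfarist criterion~(d) of Criterion~\ref{crit_main}. For a dichotomous profile, write $S_x = \{i \in N : x \in A_i\}$ for the coalition approving outcome $x$, where $A_i$ is agent $i$'s acceptable set. Since $y \succeq_i x$ for all $i$ is equivalent to $S_x \subseteq S_y$, an outcome $x$ is Pareto optimal iff $S_x$ is not properly contained in any other coalition; hence the distinct coalitions $\mathcal S = \{S_x : x \in X^*\}$ form an antichain in $2^N$. A utilitarian representation of a dichotomous preference has the form $u_i = \beta_i + w_i \mathbf 1_{A_i}$ with $w_i > 0$, so, up to the irrelevant additive constants, the welfare of $x$ is $\sum_{i \in S_x} w_i = w(S_x)$. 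Criterion~(d) therefore specializes to the following: ex-ante and ex-post efficiency coincide if and only if there are strictly positive weights $(w_i)_{i \in N}$ making the weighted coalition sum $w(S)$ constant over $S \in \mathcal S$.

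With this reduction, the positive direction ($|N| \le 4$) becomes the purely combinatorial claim that every antichain $\mathcal S$ in $2^N$ with $|N| \le 4$ admits such equalizing positive weights, which I would prove by a short case analysis on the cardinalities of the coalitions. The structural observations are: (i) if all coalitions in $\mathcal S$ have the same size $k$, then $w \equiv 1$ works, since $w(S) = k$ for every $S$; (ii) coalitions of size $0$ or $|N|$ are comparable to everything, so a nontrivial antichain only uses sizes in $\{1,2,3\}$; and (iii) no antichain in $2^{[4]}$ can simultaneously contain sets of sizes $1$, $2$, and $3$ [if $\{a\}, T, U$ had these sizes, incomparability of $\{a\}$ with $U$ forces $\{a\} = N\setminus U$, while incomparability of $T$ with $\{a\}$ gives $a \notin T$ and incomparability of $T$ with $U$ gives $a = N\setminus U \in T$, a contradiction]. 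Consequently a nontrivial antichain uses at most one of the size pairs $\{1,2\}, \{1,3\}, \{2,3\}$, and I would dispatch each: for $\{1,2\}$, the singletons' elements are disjoint from all the pairs (by incomparability), so assigning weight $2$ to each singleton element and weight $1$ to every other agent gives every singleton and every pair total weight $2$; the $\{2,3\}$ case follows by replacing $\mathcal S$ with the complementary antichain $\{N \setminus S : S \in \mathcal S\}$ and reusing the very same weights; and a mixed $\{1,3\}$ antichain in $2^{[4]}$ must consist of a single complementary pair $\{a_0\}, N \setminus \{a_0\}$, handled by $w_{a_0} = 3$ and $w_i = 1$ otherwise.

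For the negative direction I would exhibit a failing five-agent profile, and the profile of Example~\ref{ex_dichotomous} already does this: its four Pareto-optimal coalitions $\{1,3,4\}, \{1,5\}, \{2,3\}, \{2,5\}$ form an antichain on which $\tfrac12 c + \tfrac12 d$ (ex-post efficient) is stochastically dominated by $\tfrac12 a + \tfrac12 b$, so by the reduction no equalizing positive weights can exist. To cover every $|N| \ge 5$, I would append $|N| - 5$ agents who approve all outcomes: each new agent lies in every coalition, so all containments among coalitions are preserved (the same four outcomes remain Pareto optimal and form an antichain) and every agent's approval count is identical under the two lotteries, so the domination persists.

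The main obstacle is the case analysis of Step~2: although each sub-case is routine, the argument hinges on the two structural facts that a nontrivial antichain in $2^{[4]}$ uses only sizes $\{1,2,3\}$ and cannot realize all three at once, and on verifying that each explicit weight assignment is genuinely \emph{strictly} positive rather than merely nonnegative. I would be careful that the construction in the $\{1,2\}$ case assigns positive weight to every agent, including those lying outside every coalition, and that the complementation trick used for $\{2,3\}$ preserves both the antichain property and strict positivity.
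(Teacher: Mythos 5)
Your proof is correct, and it pivots on the same criterion as the paper's --- the dichotomous specialization of Criterion~\ref{crit_main}~(d), i.e.\ item~(3) of Proposition~\ref{prop:characterizdichotomous}: equivalence holds iff strictly positive agent weights can equalize the weighted size of every Pareto-optimal approval coalition --- but the combinatorics you build on top of it run differently. The paper argues by contradiction: it invokes Proposition~\ref{prop_small_profiles_general} to force at least four non-equivalent Pareto-optimal outcomes, Sperner's theorem (Lemma~\ref{lm_Sperner}) to force $|N|=4$, rules out the equal-support-size case, and then reduces to a profile containing an outcome approved by exactly one agent (yielding two explicit $4\times4$ profiles with weights $(1,1,1,1)$ and $(1,1,1,2)$) or exactly three agents (dispatched via the reversal Lemma~\ref{lm_dual}). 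You instead classify all antichains in $2^{[4]}$ directly by the multiset of coalition sizes they realize, prove that at most two of the sizes $1,2,3$ can co-occur, and exhibit equalizing weights in each mixed case, using complementation of the antichain where the paper uses preference reversal. Your route is self-contained (no Sperner, no appeal to the general small-profile result) and makes the reason four agents suffice more transparent; the paper's route reuses machinery it has already developed for other domains. You also handle $|N|>5$ explicitly by padding with agents who approve everything, a step the paper leaves implicit. The only details to nail down are the ones you already flag: that merging equivalent outcomes is without loss (they receive identical welfare under any utilitarian representation), and that every assigned weight --- including for agents lying outside all coalitions --- is strictly positive.
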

{A minimal example with $5$ agents and $4$ outcomes were not all ex-post efficient lotteries are ex-ante efficient can be found in the introduction to our paper or in \citep{aziz2019fair}.}
{In Section~\ref{sec_proof_minimal_dichotomous}, we show how to prove Proposition~\ref{prop:basicdichotomous} using the technique from Section~\ref{sec_criteria_of_equivalence}.} Note that Proposition~\ref{prop:basicdichotomous}  classifies environments purely in terms of the size of $N$. The reason is that in the dichotomous domain, the number of non-equivalent outcomes can be bounded in terms of $|N|$.

Next, we ask for which profiles (with $|N|\geq 5$ agents) is ex-ante efficiency  equivalent to ex-post. In addition to the criteria from Section~\ref{sec_criteria_of_equivalence}, which can be refined and specialized to the dichotomous domain, we obtain a new domain-specific criterion featuring the Random Serial Dictatorship.

It will be convenient to identify each outcome $x$ with a vector $x\in \{0,1\}^N$ such that $x_i=1$ if $x$ is acceptable for $i$ and $x_i=0$, otherwise. A canonical utilitarian representation of $(\succeq_i)_{i\in N}$ can be obtained by putting $u_i(x)=x_i$. 

\begin{proposition}\label{prop:characterizdichotomous} Let $(\succeq_i)_{i\in N}$ be a profile of dichotomous preferences. The following are equivalent:
\begin{enumerate}
    \item All ex-post efficient lotteries are ex-ante efficient.
    \item $\mathrm{RSD}\big((\succeq_i)_{i\in N}\big)$ is ex-ante efficient.
    \item  \label{item_lamda_criterion}  There is a vector $\lambda\in \Re^N$ with $\lambda_i>0$ for all $i$ such that the dot product $x\cdot \lambda$ is the same for all Pareto optimal outcomes $x$.
    \item\label{it:dichseq} There are no two sequences of Pareto optimal outcomes 
    $(a_k)_{k=1,\ldots, K}$ and $(b_k)_{k=1,\ldots, K}$ with no terms in common such that
    \begin{equation}\label{eq_Fishburn_dichotomous}
\sum_{k=1}^K a_k \geq \sum_{k=1}^K b_k
\end{equation}
 and such that the vector at the left-hand side is not equal to the vector at the right-hand side. Moreover, there is a bound $B_{n}$ that only depends on $n=|N|$ (and not on $X$ or agents' preferences), so that it is enough to check sequences where each outcome occurs at most $B_n$ times.
\end{enumerate}
\end{proposition}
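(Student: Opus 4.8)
The plan is to prove the three equivalences $(1)\Leftrightarrow(3)$, $(1)\Leftrightarrow(4)$, and $(1)\Leftrightarrow(2)$ separately, exploiting that assertion $(1)$ is exactly assertion $(a)$ of Criterion~\ref{crit_main}. The first two equivalences are direct specializations of Criterion~\ref{crit_main} to the dichotomous domain and require only translating its conditions $(d)$ and $(f)$ into the vector language; the genuinely new content is the equivalence with $(2)$, which rests on a structural fact about $\mathrm{RSD}$ special to dichotomous preferences.

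For $(1)\Leftrightarrow(3)$ I would invoke $(a)\Leftrightarrow(d)$ of Criterion~\ref{crit_main} and unpack what a utilitarian representation looks like here. Any representation of $\succeq_i$ has the form $u_i(x)=l_i+\lambda_i x_i$ with $\lambda_i>0$, so the welfare equals $\sum_i l_i + \lambda\cdot x$, a fixed constant plus $\lambda\cdot x$. Thus $(d)$ -- that all Pareto-optimal outcomes share a common welfare -- is literally the statement that $\lambda\cdot x$ is constant on $X^*$ for some $\lambda$ with strictly positive entries, which is $(3)$. For $(1)\Leftrightarrow(4)$ I would start from $(a)\Leftrightarrow(f)$ and rewrite the upper-contour counts in~\eqref{eq_Fishburn_criterion}. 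Since $a_k\succ_i x$ holds iff $(a_k)_i=1$ and $x_i=0$, the count $|\{k:a_k\succ_i x\}|$ equals $\sum_k (a_k)_i$ when $x_i=0$ and equals $0$ when $x_i=1$. Hence the inequalities of $(f)$ reduce to $\sum_k (a_k)_i\geq\sum_k (b_k)_i$ for every agent $i$ admitting some Pareto-optimal outcome with $x_i=0$; for an agent accepting all of $X^*$ both sides equal the common length $K$, so that coordinate is automatically an equality. This is precisely the coordinatewise inequality $\sum_k a_k\geq\sum_k b_k$ of~\eqref{eq_Fishburn_dichotomous}, with strictness in $(f)$ matching non-equality of the two vectors. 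The ``no terms in common'' requirement comes for free from the standard cancellation step: deleting a shared outcome from both sequences preserves both the inequality and its strictness. Finally, to replace the bound $\exp(\tfrac12|X^*|\ln|X^*|)$ of $(f)$ by a bound $B_n$ depending only on $n$, I would note that outcomes are determined up to indifference by their vectors in $\{0,1\}^N$, so there are at most $2^n$ inequivalent Pareto-optimal outcomes; substituting $|X^*|\leq 2^n$ yields an admissible $B_n$.

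The equivalence $(1)\Leftrightarrow(2)$ is where the real work lies. The direction $(1)\Rightarrow(2)$ is immediate, since $\mathrm{RSD}$ is always ex-post efficient, so if every ex-post efficient lottery is ex-ante efficient then so is $\mathrm{RSD}$. For the converse $(2)\Rightarrow(1)$ the key claim is that in the dichotomous domain $\mathrm{RSD}$ places positive probability on every Pareto-optimal outcome, i.e. $\supp[\mathrm{RSD}]=X^*$. Granting this, if $\mathrm{RSD}$ is ex-ante efficient then by Corollary~\ref{cor_ex_ante_support} every lottery supported on $X^*$ -- that is, every ex-post efficient lottery -- is ex-ante efficient, which is $(1)$. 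To establish the full-support claim, fix a Pareto-optimal $x$ and let $A=\{i:x_i=1\}$ be the agents who accept it. I would run serial dictatorship on any ordering $\sigma$ placing all of $A$ before the remaining agents. Since each $i\in A$ accepts $x$, the outcome $x$ survives every refinement while $A$ is processed, and after $A$ is exhausted the surviving set is $\{y\in X: y_i=1 \text{ for all } i\in A\}$. Every such $y$ satisfies $y\geq x$ coordinatewise (equal on $A$, and $\geq 0=x_i$ off $A$), so Pareto optimality of $x$ forces this set to be exactly $\{x\}$; the remaining agents cannot refine a singleton they all reject, so $\mathrm{SD}\big((\succeq_i)_{i\in N},\sigma\big)=\{x\}$ and $x$ receives positive $\mathrm{RSD}$ probability. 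The degenerate case $A=\emptyset$ arises only when all outcomes are mutually indifferent, where the claim is trivial.

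The main obstacle is exactly this full-support lemma for $\mathrm{RSD}$: it is the one step special to dichotomous preferences and does not follow from the general machinery of Section~\ref{sec_welfare}, so the argument via Pareto optimality collapsing the surviving set to a singleton is the crux. The remaining care -- the cancellation argument for disjointness and the passage from $|X^*|$ to $2^n$ in $B_n$ -- is routine bookkeeping.
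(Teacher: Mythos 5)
Your proposal is correct and follows essentially the same route as the paper: $(1)\Leftrightarrow(3)$ and $(1)\Leftrightarrow(4)$ by specializing Criterion~\ref{crit_main}~(d) and (f) to the vector form of dichotomous utilities (the paper sharpens your $2^n$ bound to $\binom{n}{\lfloor n/2\rfloor}$ via Sperner's theorem, but any bound depending only on $n$ suffices), and $(2)\Rightarrow(1)$ via the full-support property of RSD obtained by placing $A_x$ first in the dictator ordering. The only imprecision is your claim that the surviving set collapses to exactly $\{x\}$ --- it may also contain outcomes with the same acceptance vector as $x$ --- but since RSD outputs the uniform lottery over the SD set and no later agent can separate equivalent outcomes, $x$ still receives positive probability and the argument stands.
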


The proposition provides some simple examples of dichotomous domains where ex-ante and ex-post efficiency coincide. 
\begin{example}\label{ex_dichotomous_equisupport}
If all outcomes are viewed as acceptable by the same number of agents, then by taking $\lambda=(1,1,\ldots, 1)$ in item~\ref{item_lamda_criterion} we conclude that ex-ante and ex-post efficiency are equivalent for such a profile. Similarly, if outcomes and agents can be partitioned as $X=\cup_{s=1}^m X_s$ and $N=\cup_{s=1}^m N_s$ so that $N_s$ are those agents (and only those) who regard $X_s$ as acceptable, then we can take $\lambda$ such that $\lambda_i=\frac{1}{|N_s|}$ for $i\in N_s$.
\end{example}

\section{Single-peaked preferences}\label{sec_singlepeak}

For the dichotomous and house-allocation domains, there are preference profiles where the requirement of ex-ante efficiency is strictly stronger than that of ex-post efficiency. By contrast, single-peaked preferences considered in this section provide an example of a domain where the two efficiency notions are equivalent for any profile.

The single-peaked domain consists of the situations where the set of outcomes $X$ is one-dimensional, each agent has an ideal point, and agents rank an outcome over another when it is closer to their ideal point. The one-dimensional structure is captured by a strict ordering $\prec$ over outcomes. A preference $\succeq_i$ of an agent $i$ is said to be single-peaked with respect to $\prec$ if there is an outcome $x_i^*\in X$ such that $x_i^*\preceq x\prec y$ or $y\prec x\preceq x_i^*$ implies  $x\succ_i y$. A profile $(\succeq_i)_{i\in N}$ is a profile of single-peaked preferences if there exists a common ordering $\prec$ of outcomes such that each $\succeq_i$ is single-peaked with respect to~$\succ$.
\begin{proposition}\label{prop_singlepeak}
For any profile of single-peaked preferences, the set of ex-ante efficient lotteries coincides with the set of ex-post efficient ones.
\end{proposition}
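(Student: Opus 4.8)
The plan is to invoke Criterion~\ref{crit_main} and verify its condition~$(e)$: I will show that no nonzero $\alpha\in\Re^{X^*}$ can satisfy $\sum_{x\in X^*}\alpha_x=0$ together with $\sum_{y\succ_i x}\alpha_y\ge 0$ for all $i$ and $x$ (with some strict inequality). The argument rests on two structural features of single-peaked profiles that I would establish first. First, every upper contour set $\{y:y\succ_i x\}$ is an \emph{interval} in the order $\prec$ containing the peak $x_i^*$, which follows from the single-peakedness implication by a short case analysis on the position of the peak relative to three points $y_1\prec y_2\prec y_3$. Second, writing $L=\min_i x_i^*$ and $R=\max_i x_i^*$ for the extreme peaks, the set of Pareto-optimal outcomes is exactly the interval $X^*=\{x:\ L\preceq x\preceq R\}$: any $x$ in this range is strictly preferred to every $y\prec x$ by the agent with peak $R$ and to every $y\succ x$ by the agent with peak $L$, whereas any $x\prec L$ (resp.\ $x\succ R$) is dominated for \emph{all} agents by $L$ (resp.\ $R$).

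With these facts in hand, enumerate $X^*=\{z_1\prec\cdots\prec z_m\}$ and set the prefix sums $P_j=\sum_{l\le j}\alpha_{z_l}$, so that $P_m=0$ by the equality constraint. The decisive observation is that on the restricted domain $X^*$ the extreme-peak agents have \emph{strictly monotone} preferences: the agent with peak $L=z_1$ ranks $z_1\succ z_2\succ\cdots\succ z_m$, so its strict upper contour sets inside $X^*$ are precisely the prefixes $\{z_1,\dots,z_j\}$, forcing $P_j\ge 0$; symmetrically, the agent with peak $R=z_m$ ranks $z_m\succ\cdots\succ z_1$, so its upper contour sets are the suffixes, forcing $\sum_{l>j}\alpha_{z_l}=-P_j\ge 0$, i.e.\ $P_j\le 0$. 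Hence $P_j=0$ for every $j$, so $\alpha_{z_j}=P_j-P_{j-1}=0$ and $\alpha\equiv 0$. But then no inequality in~$(e)$ is strict, so no admissible $\alpha$ exists and condition~$(e)$ holds; by Criterion~\ref{crit_main} the sets of ex-ante and ex-post efficient lotteries coincide. (The degenerate case $L=R$, where $X^*$ is a singleton, is immediate.)

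I expect the only real subtlety to be the choice of which equivalent condition to attack. The most intuitive route is condition~$(d)$, directly building utility representations with constant welfare on $X^*$: on each consecutive pair the required sign of $u_i(z_{j+1})-u_i(z_j)$ is pinned down by single-peakedness and flips exactly once (at the peak), and one is tempted to choose magnitudes freely so that the differences cancel across agents. The obstacle is that single-peaked preferences may exhibit \emph{cross-indifferences} $z_a\simeq_i z_b$ with $z_a\prec x_i^*\prec z_b$, which couple the magnitudes on the two sides of the peak and obstruct an unconstrained choice of utilities. Condition~$(e)$ sidesteps this entirely, since it needs only the two extreme agents, whose preferences on $X^*$ are strict and monotone; cross-indifferences never enter, and the telescoping prefix-sum argument closes cleanly.
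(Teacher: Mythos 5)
Your proof is correct and takes essentially the same route as the paper's: both identify $X^*$ as the interval between the extreme peaks and then apply Criterion~\ref{crit_main}~(e), using only the two extreme-peak agents (whose preferences are strictly monotone on $X^*$) to force all prefix sums of $\alpha$ to vanish by the telescoping argument. The additional discussion of why condition~(d) is harder is accurate but not needed.
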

The proof of Proposition~\ref{prop_singlepeak} relies on Criterion~\ref{crit_main}~(e), and may be found in Section~\ref{sec_singlepeak_proof}.
{A particular case of Proposition~\ref{prop_singlepeak} for   single-peaked dichotomous preferences is contained in \citep{aziz2019fair}.}

\section{Social choice domain}\label{sec:socialchoice}

In the housing model, agents are indifferent over outcomes that do not differ in their private consumption. For the dichotomous domain, agents only have two indifference classes. In contrast, we now turn to environments where agents' preferences are strict; no indifferences are allowed, but preferences may otherwise be arbitrary. The resulting class of environments are termed the \df{social choice domain}.

Let $N$ be a finite set of agents, $X$ a finite set of outcomes, and  $(\succeq_i)_{i\in N}$  a profile of preferences over $X$ which are now assumed to be strict.

In contrast to the single-peaked domain considered in Section~\ref{sec_singlepeak}, there are profiles where not all ex-post efficient lotteries are ex-ante efficient; see Example~\ref{ex_2Condorcet} in the introduction.
{However, as it was mentioned by~\cite{brandl2016impossibility},}  even for such profiles, the Random Serial Dictatorship (RSD) always outputs an ex-ante efficient lottery.
\begin{proposition}\label{prop_RSD_social_choice}
For any  profile of strict preferences $(\succeq_i)_{i\in N}$ the $\mathrm{RSD}\big((\succeq_i)_{i\in N}\big)$
is ex-ante efficient.
\end{proposition}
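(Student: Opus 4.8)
The plan is to first identify the support of $\mathrm{RSD}$ explicitly, and then verify via Lemma~\ref{lm_exante_support_characterization} that this support carries an ex-ante efficient lottery; by Corollary~\ref{cor_ex_ante_support} this is exactly what is needed.

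\textbf{Step 1 (the support of RSD).} Since preferences are strict, for any ordering $\sigma$ the lexicographic preference $\succeq_\mathrm{lex}$ agrees with $\succeq_{\sigma(1)}$ on distinct outcomes: two different outcomes are never indifferent for $\sigma(1)$, so the first dictator already breaks all ties. Hence $\mathrm{SD}\big((\succeq_i)_{i\in N},\sigma\big)=\{t_{\sigma(1)}\}$ is the singleton consisting of the unique $\succeq_{\sigma(1)}$-maximal outcome $t_i$ of the first agent, and
$$\mathrm{RSD}\big((\succeq_i)_{i\in N}\big)=\frac{1}{n!}\sum_{\sigma\in S_n}\delta_{t_{\sigma(1)}}.$$
In particular $\supp[\mathrm{RSD}]=T$, where $T=\{t_i:i\in N\}$ is the set of outcomes top-ranked by some agent. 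Each $t_i$ is Pareto optimal, since agent $i$ strictly prefers $t_i$ to every other outcome and so no outcome can weakly dominate it; thus $T\subseteq X^*$.

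\textbf{Step 2 ($T$ supports an ex-ante efficient lottery).} By Corollary~\ref{cor_ex_ante_support} it suffices to show that $T$ supports an ex-ante efficient lottery, and by Lemma~\ref{lm_exante_support_characterization} this can fail only if there is $\alpha\in\Re^{T}\times\Re_+^{X^*\setminus T}$ with $\sum_{x\in X^*}\alpha_x=0$, $\sum_{y\succ_i x}\alpha_y\geq 0$ for all $i\in N$ and $x\in X^*$, and at least one strict inequality. I would rule this out by showing any such $\alpha$ must vanish. The crux is that $\alpha_{t_i}\geq 0$ for every $i$: assuming $|X^*|\geq 2$, let $x_i$ be agent $i$'s most preferred Pareto-optimal outcome other than $t_i$; then the only element of $X^*$ strictly above $x_i$ for $i$ is $t_i$, so the constraint at $x=x_i$ reads $\alpha_{t_i}\geq 0$. (If $|X^*|=1$ the claim is trivial, as the only lottery is degenerate.) Therefore $\alpha_x\geq 0$ for all $x\in T$, while $\alpha_x\geq 0$ for $x\in X^*\setminus T$ holds by the membership $\alpha\in\Re^{T}\times\Re_+^{X^*\setminus T}$. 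Hence $\alpha\geq 0$ on all of $X^*$, and combined with $\sum_{x\in X^*}\alpha_x=0$ this forces $\alpha\equiv 0$, so no inequality can be strict, a contradiction.

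\textbf{Main obstacle.} The single delicate point is Step~2. One cannot simply take $x$ to be agent $i$'s second-ranked outcome, because the constraints of Lemma~\ref{lm_exante_support_characterization} range only over $x\in X^*$, and the second-ranked outcome need not be Pareto optimal; choosing instead $i$'s favorite Pareto-optimal outcome strictly below $t_i$ is precisely what makes the upper contour set (within $X^*$) collapse to $\{t_i\}$ and yields the sign constraint $\alpha_{t_i}\geq 0$. Everything else is bookkeeping. Alternatively, one could establish Step~2 through Theorem~\ref{lm_welfarist} by exhibiting utilitarian representations under which every outcome of $T$ ties for maximal welfare, but arranging \emph{exact} ties across $T$ (which all have distinct ``multiplicities'' $|\{i:t_i=x\}|$) is more cumbersome than the sign argument above.
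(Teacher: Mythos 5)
Your proof is correct. It differs from the paper's in the machinery used, though both hinge on the same observation: every outcome in the support of RSD is the \emph{unique} top choice of some agent, and the upper contour set of that agent at (just below) her top collapses to a singleton. The paper argues directly from the definition of stochastic dominance: for any $p'\neq p=\mathrm{RSD}$ there is $y$ with $p_y>p'_y$, and the agent who top-ranks $y$ then gets her favorite outcome with strictly smaller probability under $p'$, so $p'$ cannot dominate $p$ --- four sentences, no auxiliary results. You instead route the argument through Lemma~\ref{lm_exante_support_characterization} and Corollary~\ref{cor_ex_ante_support}, showing that any feasible $\alpha$ must satisfy $\alpha_{t_i}\geq 0$ (via the constraint at agent $i$'s best Pareto-optimal outcome below $t_i$) and hence vanish. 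Your care in choosing $x_i\in X^*$ rather than $i$'s second-ranked outcome in $X$ is exactly right, since the constraints of the lemma only range over Pareto-optimal $x$; the paper's direct argument sidesteps this issue entirely because stochastic dominance is quantified over all of $X$. What your route buys is the slightly stronger structural conclusion that the full set $T$ of top-ranked outcomes supports an ex-ante efficient lottery (so, by Corollary~\ref{cor_ex_ante_support}, \emph{every} lottery supported on $T$ is ex-ante efficient, not just the particular RSD lottery); what it costs is length and reliance on the LP characterization. Both arguments extend verbatim to any domain where each agent's top indifference class is a singleton, as the paper notes in its remark.
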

This property of the social choice domain is to be contrasted with the phenomenon that we observed in the dichotomous domain, where ex-ante efficiency of RSD determines whether or not the sets of ex-ante and ex-post efficient lotteries coincide.
\begin{proof}
Let $p=\mathrm{RSD}\big((\succeq_i)_{i\in N}\big)$. Then $p$ is supported on outcomes $x$ such that there is an agent $i$ for whom $x$ is the most preferred outcome. Consider a lottery $p'\ne p$. Let $y$ be an outcome such that $p_y> p_y'$ and $i$ be an agent that top-ranks $y$. Since $i$ has only one most preferred outcome, the chance that $i$ receives her most-preferred outcome is strictly lower under $p'$ than under $p$. Thus no lottery $p'$ can dominate the outcome of RSD, i.e.,  this outcome is ex-ante efficient.
\end{proof}

\begin{remark}
Note that the proof of Proposition~\ref{prop_RSD_social_choice} only relies on the fact that each agent's top choice is unique. In particular, RSD is ex-ante efficient in any domain where the top indifference class of each agent is a singleton.
\end{remark}

To construct preference profiles where ex-ante and ex-post efficiency are not equivalent, one can use a connection between Criterion~\ref{crit_main}~(f) and the Bertrand ballot-counting problem from probability theory.\footnote{We are grateful to Ron Holzman who suggested this connection.} In the classic version of the ballot-counting problem, there are two candidates, $A$ and $B$, and we are given a ballot box with voting slips in equal numbers for the two candidates. We count the votes in random order, and the basic question is to find the probability that, at every stage of the count, the current count in favor of $A$ is at least the current count in favor of~$B$.

Here we consider a weighted version of the ballot problem. The ballot box is stuffed with envelopes, each containing multiple slips in favor of one candidate. We identify the set of outcomes $X$ with the set of envelopes in the box, and the set of agents $N$ with a set of permutations of $X$. So let  $X$ be the collection of envelopes, each of which contains some positive integer number of slips for one of the two candidates. Suppose that the total number of slips for $A$ equals that for $B$. Let $N$ be a family of permutations of $X$ having two properties: 
\begin{itemize}
    \item[i)] For every permutation in $N$, at every stage, the current count for $A$ (the total number of slips in the envelopes counted so far) is at least that for $B$.
    \item[ii)] For any two envelopes $x$ and $x'$, there is a permutation in $N$ such that $x$ appears before $x'$, and another one in which $x'$ appears before $x$.
\end{itemize}
We interpret envelopes $X$ as outcomes, and permutations $N$ as agents. The (strict) preference of an agent are such that her ordering of outcomes, from best to worst, is given by the corresponding permutation.  We call the resulting profile of preferences a \emph{ballot-counting profile}.
 
For ballot-counting profiles, ex-ante  and ex-post efficiency turn out to be distinct. Moreover, such profiles play the role of \emph{forbidden structures}: meaning that, for any profile of strict preferences, ex-ante and ex-post efficiency coincide whenever there is no ballot-counting sub-profile.
 
Two preference profiles are said to be equivalent if there is a way to rename the agents and the outcomes in one of them to obtain the other. The following proposition is proved in Section~\ref{sec_proof_ballot}.
 \begin{proposition}\label{prop_ballot_and_efficiency}
For a profile of strict preferences $(\succ_i)_{i\in N}$ over $X$, the set of ex-post efficient lotteries coincides with the set of ex-ante efficient lotteries, if and only if there is no subset $Y$ of Pareto optimal outcomes such that $(\succ_i)_{i\in N}$ restricted to $Y$ is equivalent to a ballot-counting profile.
 \end{proposition}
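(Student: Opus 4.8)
The plan is to characterize the forbidden structures using Criterion~\ref{crit_main}~(f), which already tells us that ex-ante and ex-post efficiency fail to coincide precisely when there exist sequences $(a_k)$ and $(b_k)$ of Pareto optimal outcomes satisfying the counting inequality~\eqref{eq_Fishburn_criterion} with at least one strict inequality. For strict preferences the condition~\eqref{eq_Fishburn_criterion} admits a cleaner reading: since each agent $i$ has a strict preference, the upper contour sets are nested totally ordered initial segments, so the inequality $|\{k: a_k \succ_i x\}| \geq |\{k: b_k \succ_i x\}|$ for every $x$ says exactly that, along agent $i$'s ranking, the sequence $(a_k)$ weakly ``leads'' the sequence $(b_k)$ at every threshold. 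The key reformulation I would establish first is that we may take the two sequences to have no terms in common: whenever an outcome appears in both multisets we can cancel one copy from each without disturbing any of the inequalities~\eqref{eq_Fishburn_criterion}, and the strictness is preserved because the cancellation is uniform across all agents and thresholds. This reduces us to a pair of disjoint (as sets) sequences, which is the combinatorial data underlying a ballot-counting profile.

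The heart of the argument is the translation between the counting inequalities and the ballot-counting axioms (i) and (ii). I would proceed as follows. Given sequences witnessing the failure, let $Y$ be the set of distinct outcomes appearing in $(a_k)$ and $(b_k)$; these are the ``envelopes.'' For the two-candidate ballot structure, I would read each agent $i\in N$ as generating a permutation of $Y$ (her restricted strict ranking), and assign to each envelope $x$ a signed integer weight equal to its net multiplicity $(\#\text{ of times in } (a_k)) - (\#\text{ of times in }(b_k))$; label $x$ a slip for candidate $A$ if this is positive and for $B$ if negative, with the number of slips equal to the absolute value. The global balance $\sum_k a_k$ and $\sum_k b_k$ having equal total count corresponds to equal totals for $A$ and $B$, matching the condition $\sum_{x\in X^*}\alpha_x = 0$ in the homogeneous reformulation. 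The inequality~\eqref{eq_Fishburn_criterion}, read at each agent's threshold $x$, becomes precisely property (i): reading off envelopes in the order dictated by agent $i$'s preference (best to worst), the running count for $A$ never falls behind the running count for $B$. This is where I expect the main obstacle to lie, namely in handling the bookkeeping between multisets-with-repetition and the integer-weighted envelope model cleanly, and in verifying that strictness of~\eqref{eq_Fishburn_criterion} for some $i$ and $x$ corresponds to a genuine inequality (not all envelopes neutral) so that $Y$ is nonempty and the ballot profile is nondegenerate.

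For property (ii), I would argue that if some pair of envelopes $x,x'$ were always ordered the same way across all agents in $N$, then they could be merged or one eliminated without affecting any inequality, contradicting minimality of the witnessing structure; hence after passing to a minimal witness the separation property (ii) holds automatically. For the converse direction, given a ballot-counting sub-profile on $Y\subseteq X^*$, I would simply run the construction in reverse: define $\alpha_x$ to be the signed net slip count of envelope $x$ (positive for $A$, negative for $B$, zero off $Y$), extended by zero to all of $X^*$. The balance condition gives $\sum_x \alpha_x = 0$; property (i) for the permutation corresponding to each agent $i$ yields $\sum_{y\succ_i x}\alpha_y \geq 0$ at every threshold $x$; and nondegeneracy of the ballot profile (at least one envelope carries a nonzero weight) together with property (ii) guarantees at least one strict inequality. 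By Criterion~\ref{crit_main}~(e) this exactly certifies that ex-ante and ex-post efficiency differ. The uniform repetition bound from Criterion~\ref{crit_main}~(f) translates into a bound on the number of slips per envelope, which keeps $Y$ finite and the ballot profile well-defined. Assembling both directions yields the stated equivalence, with the only delicate point being the careful verification that the minimality reduction preserves both the strictness and the disjointness needed to match axioms (i) and (ii).
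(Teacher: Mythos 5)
Your overall architecture matches the paper's proof: both directions run through Criterion~\ref{crit_main}~(e)/(f), identifying envelopes with outcomes, slip counts with net multiplicities (equivalently the integral weights $\alpha_x$), condition~(i) with the threshold inequalities~\eqref{eq_Fishburn_criterion}, and the balance of slips with $\sum_x\alpha_x=0$. Your preliminary cancellation step to make the two sequences disjoint is fine, though unnecessary: Lemma~\ref{lm_ala_Fishburn} already delivers sequences with no common terms, since they are built from the positive and negative parts of $\alpha$.

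The one step that does not work as written is your treatment of condition~(ii). You argue that if two envelopes $x,x'$ were ordered the same way by every agent, they ``could be merged or one eliminated without affecting any inequality, contradicting minimality of the witnessing structure.'' This is unjustified: deleting a copy of an $A$-envelope lowers the left-hand side of~\eqref{eq_Fishburn_criterion} at every threshold below it, so the inequality can fail at thresholds lying between $x'$ and $x$, and you never actually set up a minimality notion for the witness. The correct (and much shorter) argument, which is the one the paper uses, is that the situation is vacuous: the outcomes of $Y$ are Pareto optimal and preferences are strict, so for any distinct $y,z\in Y$ there must be one agent with $y\succ_i z$ and another with $z\succ_j y$ --- otherwise one outcome would Pareto dominate the other. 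With that substitution, and noting (as you essentially do, and as the paper does by taking the threshold at the second envelope of some permutation, whose first envelope must be an $A$-envelope by condition~(i)) that nondegeneracy yields the required strict inequality, your proof goes through.
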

 \begin{example}
 The profile of preferences from Example~\ref{ex_2Condorcet} is equivalent to a ballot-counting profile. We need $3$ envelopes $A_1$, $A_2$, and $A_3$ each with one slip for candidate $A$ and $3$ envelopes $B_1$, $B_2$, and $B_3$ with one slip for candidate $B$. The permutations $N$ are as follows:
 \begin{align*}
     A_1 B_1 A_2 B_2 A_3 B_3\\
A_2 B_3 A_3 B_1 A_1 B_2\\
A_3 B_2 A_1 B_3 A_2 B_1
 \end{align*}
i.e., $A$ and $B$ alternate and we cyclically permute $A_1, A_2, A_3$ and $B_1, B_2, B_3$  in the opposite directions.
\end{example}
One can construct profiles with less than six outcomes where not all ex-post efficient lotteries are ex-ante efficient at the cost of increasing the number of agents. The minimal number of outcomes that we need is $4$ and it requires $4$ agents.
\begin{example}
Consider $2$ envelopes $A_1$ and $A_2$ with one slip for candidate $A$ and $2$ envelopes $B_1$ and $B_2$ with one slip for candidate $B$. The preferences are defined by the four permutations alternating between $A$ and $B$:
 \begin{align*}
     A_1 B_1 A_2 B_2\\
     A_2 B_1 A_1 B_2\\
     A_1 B_2 A_2 B_1\\
     A_2 B_2 A_1 B_1
 \end{align*}

\end{example}
These examples demonstrate tightness of Proposition~\ref{prop_small_profiles_general}.
\begin{corollary}[\cite{aziz2018tradeoff}]\label{cor_tightness}
Proposition~\ref{prop_small_profiles_general} is tight. Namely, for any $n$ and $m$ such that either 
\begin{itemize}
    \item $n\geq 3$ and $m\geq 6$ or
    \item $n\geq 4$ and $m\geq 4$
\end{itemize}
there exists a profile of preferences with $|N|=n$ agents and $|X^*|=m$ Pareto optimal outcomes such that the sets of ex-ante and ex-post efficient lotteries are distinct.
\end{corollary}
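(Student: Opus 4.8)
The plan is to start from the two minimal counterexamples already exhibited in this section --- the ballot-counting profile equivalent to Example~\ref{ex_2Condorcet} on $(n_0,m_0)=(3,6)$ and the four-agent, four-outcome example on $(4,4)$ --- and to \emph{pad} them, enlarging the number of agents and of outcomes one step at a time while preserving two invariants: (i) every outcome is Pareto optimal, so that $|X^*|$ equals the total number of outcomes; and (ii) the profile satisfies Criterion~\ref{crit_main}(e), i.e.\ there is a certificate $\alpha\in\Re^{X^*}$ with $\sum_x\alpha_x=0$, with $\sum_{y\succ_i x}\alpha_y\ge 0$ for all $i,x$, and with at least one strict inequality, witnessing that ex-ante and ex-post efficiency differ. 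Both base profiles admit a certificate valued in $\{+1,-1\}$: in Example~\ref{ex_2Condorcet} take $\alpha=+1$ on $a,b,c$ and $\alpha=-1$ on $x,y,z$; in the four-outcome example take $\alpha=+1$ on the two $A$-envelopes and $\alpha=-1$ on the two $B$-envelopes. A short check against~\eqref{eq_criterion_alpha} confirms these are valid certificates.

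For the agent-padding operation (fixing the outcome set), I would insert one new agent whose strict ranking lists the outcomes in weakly decreasing order of $\alpha$. The partial sums of a sequence of the form $+1,\dots,+1,-1,\dots,-1$ are all nonnegative, so every upper-contour set of the new agent has nonnegative $\alpha$-sum and the \emph{same} $\alpha$ remains a valid certificate; the original strict inequality is untouched. Because we work in the strict social choice domain, adding an agent cannot turn a Pareto-optimal outcome into a dominated one: a dominating $y\neq x$ would satisfy $y\succ_i x$ for \emph{all} original $i$ (strictness upgrades each weak comparison between distinct outcomes to a strict one), so $y$ already dominated $x$ in the smaller profile, a contradiction. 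Thus invariant~(i) is preserved as well.

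For the outcome-padding operation (fixing the agent set), to append $t$ fresh outcomes $z_1,\dots,z_t$ I set $\alpha_{z_j}=0$ for each $j$ and place them in a block: agent $1$ ranks $z_1\succ\cdots\succ z_t$ above all old outcomes, agent $2$ ranks $z_t\succ\cdots\succ z_1$ below all old outcomes, and every remaining agent ranks the block (in any order) below all old outcomes. Since agent $1$ puts the whole block on top, no old outcome dominates any $z_j$; since agent $2$ puts the block on the bottom, no $z_j$ dominates any old outcome; and since agents $1$ and $2$ order the block in opposite directions, no $z_j$ dominates another $z_k$. Hence all outcomes remain Pareto optimal. As $\alpha$ vanishes on the block and sums to $0$ over the old outcomes, every new upper-contour set still has $\alpha$-sum equal to an old upper-contour sum plus $0$, hence nonnegative, so the certificate survives.

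Assembling: to realize a target $(n,m)$ with $n\ge 3,\ m\ge 6$, start from the $(3,6)$ profile and apply the agent operation $n-3$ times followed by the outcome operation $m-6$ times; to realize $(n,m)$ with $n\ge 4,\ m\ge 4$, start from the $(4,4)$ profile and pad by $n-4$ agents and $m-4$ outcomes. Both regions together are precisely the statement, and in each resulting profile all $m$ outcomes are Pareto optimal with a valid Criterion~\ref{crit_main}(e) certificate, so the sets of ex-ante and ex-post efficient lotteries are distinct. The only genuinely delicate point, and the one I would verify most carefully, is the outcome-addition step: one must simultaneously keep every new outcome Pareto optimal --- which, with strict preferences, requires disagreement among agents about every pair involving a new outcome --- and keep the $\alpha$-certificate feasible. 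The ``block on top for agent~$1$, reversed block at the bottom for agent~$2$'' device is exactly what reconciles these two demands; everything else is routine bookkeeping.
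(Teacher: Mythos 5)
Your proposal is correct and follows essentially the same route as the paper, which simply exhibits the $(3,6)$ profile of Example~\ref{ex_2Condorcet} and the $(4,4)$ ballot-counting profile and leaves the extension to general $(n,m)$ implicit. Your agent- and outcome-padding operations (preserving Pareto optimality of all outcomes and the Criterion~\ref{crit_main}(e) certificate $\alpha$, with the new outcomes carrying $\alpha=0$ and the new agents ranking outcomes in decreasing order of $\alpha$) are a valid and careful way to supply that omitted bookkeeping.
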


Let us call a ballot-counting profile simple if each envelope contains exactly one slip for $A$ or $B$. In all the examples we saw so far, the constructed profiles were simple. Do non-simple profiles exist? Simple profiles correspond to $\alpha_x\in \{-1,0,1\}$ in Criterion~\ref{crit_main}~(d). By Lemma~\ref{lm_integrality} from Appendix~\ref{app_integral},
without loss of generality, $\alpha$ has integer coordinates bounded by $\exp\left(\frac{1}{2}|X^*|\cdot \ln |X^*|\right)$ in absolute value. Can this bound be replaced by $1$?
The answer is negative as the following example demonstrates the existence of non-simple profiles.
\begin{example}
Envelopes $A_1$, $A_2$, $A_3$, and  $A_4$ contain one slip and $B_1$ and $B_2$, two slips. The set $N$ contains $12$ permutations all of which have the form $AABAAB$.  Each unordered pair $A_i$, $A_j$ appears twice as the first pair of $A$'s, in one of them $B_1$ precedes $B_2$ and in the other $B_2$ precedes $B_1$:
 \begin{align*}
     A_1 A_2 B_1 A_3 A_4 B_2\\
     A_1 A_3 B_1 A_2 A_4 B_2\\
     A_1 A_4 B_1 A_3 A_2 B_2\\
     A_2 A_3 B_1 A_1 A_4 B_2\\
     A_2 A_4 B_1 A_1 A_3 B_2\\
     A_3 A_4 B_1 A_1 A_2 B_2\\
      A_1 A_2 B_2 A_3 A_4 B_1\\
     A_1 A_3 B_2 A_2 A_4 B_1\\
     A_1 A_4 B_2 A_3 A_2 B_1\\
     A_2 A_3 B_2 A_1 A_4 B_1\\
     A_2 A_4 B_2 A_1 A_3 B_1\\
     A_3 A_4 B_2 A_1 A_2 B_1\\
 \end{align*}

\end{example}

Despite the last example showing that not every ballot-counting profile is simple,  it is enough to consider only simple profiles  if we allow for retractions.

Consider a preference profile $(\succ_i)_{i\in N}$ with the set of outcomes $X$ and assume that a subset of outcomes $A\subset X$ are adjacent in the preferences  of all the agents, i.e., $a\succ_i b\succ_i c$ with $a,c\in A$ implies $b\in A$. Let $(\succ_i')_{i\in N}$ be the preference profile where the set of outcomes $A$ is replaced by one outcome $\{A\}$. We call any profile obtained by a consecutive application of this procedure to different sets of adjacent outcomes a retraction  of the original profile.

Any ballot-counting profile is a retraction of a simple ballot-counting profile. Indeed,
 each envelope with $k>1$ slips can be split into a collection of $k$ envelopes each containing one slip and placed one after another in each of the permutations.

Let us call a retraction trivial if it coincides with the original profile or has just one outcome. A profile is said to be irretractable if it has no non-trivial retractions. A retraction is maximal if it is irretractable. We obtain the following corollary of Proposition~\ref{prop_ballot_and_efficiency}.
\begin{corollary}
For a profile of strict preferences $(\succ_i)_{i\in N}$ over $X$, the set of ex-post efficient lotteries coincides with the set of ex-ante efficient lotteries, if and only if there is no subset $Y$ of Pareto optimal outcomes such that $(\succ_i)_{i\in N}$ restricted to $Y$ is equivalent a maximal retraction of a simple ballot-counting profile.
\end{corollary}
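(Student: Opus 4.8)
The plan is to deduce the corollary from Proposition~\ref{prop_ballot_and_efficiency} by showing that its two forbidden-structure conditions describe exactly the same profiles. Writing $(A)$ for ``some $Y\subseteq X^*$ has $(\succ_i)|_Y$ equivalent to a ballot-counting profile'' and $(B)$ for ``some $Y\subseteq X^*$ has $(\succ_i)|_Y$ equivalent to a maximal retraction of a simple ballot-counting profile,'' it suffices to prove $(A)\Leftrightarrow(B)$; Proposition~\ref{prop_ballot_and_efficiency} then yields the statement.

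The first tool I would record is that retraction is an instance of restriction. If a set $A$ of outcomes is adjacent in every agent's order, then deleting all but one representative of $A$ leaves a profile order-isomorphic to the one obtained by contracting $A$ to a single outcome: the relative order of the surviving outcomes is untouched and the kept representative occupies the position vacated by the block. Iterating, every retraction of $(\succ_i)|_Y$ is equivalent to $(\succ_i)|_{Y'}$ for some $Y'\subseteq Y$. Combined with the observation already made in the text, that every ballot-counting profile is a retraction of a simple one, this gives $(A)\Rightarrow(B)$ immediately: starting from a ballot-counting sub-profile on $Y$, I would pass to its maximal retraction $P^\ast$; by transitivity of retraction $P^\ast$ is a maximal retraction of the associated simple ballot-counting profile, and by the previous sentence $P^\ast$ is equivalent to $(\succ_i)|_{Y'}$ for some $Y'\subseteq Y\subseteq X^*$.

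For the converse $(B)\Rightarrow(A)$ I would argue through the witness of Criterion~\ref{crit_main}$(e)$ rather than through the envelopes directly. A simple ballot-counting profile carries the witness $\alpha\in\{-1,+1\}^{Z}$ given by the signed slip of each envelope. When a block $S$ is contracted, assign the new outcome the weight $\sum_{s\in S}\alpha_s$. Because the block occupies consecutive positions in every order, each cut $\{y\succ_i x\}$ of the contracted profile corresponds to a cut of the simple profile lying just above the block, so the defining inequalities $\sum_{y\succ_i x}\alpha_y\ge 0$ and the balance $\sum_x\alpha_x=0$ are inherited; hence the contracted weights form a valid witness for the maximal retraction $P^\ast$. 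Restricting this witness to its support $Y'=\supp[\alpha]$ and discarding the zero-weight outcomes preserves every cut sum, so $(\succ_i)|_{Y'}$ admits a witness with no vanishing coordinate. By Lemma~\ref{lm_integrality} such a witness may be taken integral, and it is exactly the signed-slip description of a ballot-counting profile on $Y'$; since $Y'\subseteq Y\subseteq X^*$, condition $(A)$ follows.

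The hard part will be the treatment of \emph{mixed} blocks in the step $(B)\Rightarrow(A)$: a maximal retraction may contract a block containing slips for both candidates, and such a block can have total weight zero, so a maximal retraction of a simple ballot-counting profile need not itself be a ballot-counting profile. This is precisely why I route the argument through the witness $\alpha$ and its support, rather than trying to show $P^\ast$ is ballot-counting on the nose: passing to $\supp[\alpha]$ excises the degenerate outcomes while leaving a bona fide ballot-counting sub-profile. A secondary point I would need to pin down is the well-definedness of the maximal retraction, namely that contracting maximal adjacent blocks yields an irretractable profile independent of the order of contractions; I would handle this using the nesting structure of the adjacent blocks (the common intervals of the agents' orders), and the reversal symmetry of Lemma~\ref{lm_dual} can be invoked to halve the bookkeeping by treating the two candidates symmetrically.
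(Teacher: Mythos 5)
Your reduction of the corollary to the equivalence $(A)\Leftrightarrow(B)$ is the right frame, and your argument for $(A)\Rightarrow(B)$ is correct and in fact more careful than the paper, which offers no proof beyond the remark that every ballot-counting profile is a retraction of a simple one: the identification of retraction with restriction, the transitivity of retraction, and the termination of repeated non-trivial retractions at an irretractable profile with at least two outcomes are all sound.

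The converse $(B)\Rightarrow(A)$, however, has a genuine gap at exactly the point you flag as the hard part, and the gap cannot be closed because the implication is false. Your plan is to push the signed-slip witness $\alpha\in\{-1,+1\}^Z$ through the contraction and then restrict to $\supp[\alpha]$; but nothing prevents \emph{every} contracted block from having net weight zero, in which case $\supp[\alpha]=\emptyset$ and no ballot-counting sub-profile is produced. This actually happens. Take eight one-slip envelopes $A_1,A_2,B_1,B_2,A_3,A_4,B_3,B_4$ and eight permutations: the four orders $A_1B_1A_2B_2$, $A_2B_1A_1B_2$, $A_1B_2A_2B_1$, $A_2B_2A_1B_1$, each followed by $A_3A_4B_3B_4$, together with the four analogous orders of $A_3,A_4,B_3,B_4$, each followed by $A_1A_2B_1B_2$. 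Every prefix count satisfies condition i) and every pair of envelopes occurs in both orders, so this is a simple ballot-counting profile; the sets $C_1=\{A_1,A_2,B_1,B_2\}$ and $C_2=\{A_3,A_4,B_3,B_4\}$ are adjacent in all eight orders, and contracting both yields an irretractable two-outcome profile --- a maximal retraction --- in which both outcomes carry weight zero. That two-outcome profile (eight agents, four preferring each outcome) satisfies $(B)$ with $Y=X^*$, yet by Proposition~\ref{prop_small_profiles_general}(1) its ex-ante and ex-post efficient lotteries coincide, and it contains no ballot-counting sub-profile (a two-envelope ballot-counting profile would violate condition i) in the permutation that condition ii) forces to start with the $B$ envelope). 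So $(B)\not\Rightarrow(A)$, your proof of that direction fails precisely when all blocks are balanced, and the corollary as literally stated is falsified by this profile; a correct version must additionally require the maximal retraction to be a ballot-counting profile itself, i.e.\ that no contracted block is balanced. Your secondary worry about uniqueness of the maximal retraction is immaterial, since the statement only quantifies existentially over maximal retractions.
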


\section{Proofs}\label{sec:proofs}

\subsection{Proof of Lemma~\ref{lm_ala_Fishburn}}\label{sec_Fishburn_proof}
First, we  assume that $S$ does not support an ex-ante efficient lottery and show how to  construct the sequences $(a_k)_{k=1,\ldots, K}$ and $(b_k)_{k=1,\ldots, K}$. By the assumption on $S$, the system from Lemma~\ref{lm_exante_support_characterization} has a solution $\alpha$. By Lemma~\ref{lm_integrality} and the Hadamard inequality~\ref{eq_bound_x_01_matrix} from Appendix~\ref{app_integral}, we can assume that this solution is integral and the absolute value $|\alpha_x|$ is bounded by $\exp\left(\frac{1}{2}|X^*|\cdot\ln |X^*|\right)$ for all $x$. Let $K=\frac{1}{2}\sum_{x \in X^*}|\alpha_x|$ and $(a_k)_{k=1,\ldots, K}$ be the sequence of outcomes $y$ such that $\alpha_y>0$ where each such outcome is repeated $\alpha_y$ times. Similarly, $(b_k)_{k=1,\ldots, K}$ is the sequence of outcomes $y$ with $\alpha_y<0$ and each repeated $|\alpha_y|$ times. Note that each sequence has the same number of outcomes $k$ since $\sum_{x\in X^*}\alpha_x=0$. The inequality~\eqref{eq_alpha_general} becomes equivalent to~\eqref{eq_Fishburn} and, hence, the latter inequality holds and is strict for some $i$ and $x$.

To prove the opposite direction, we assume that the sequences $(a_k)_{k=1,\ldots, K}$ and $(b_k)_{k=1,\ldots, K}$ are given and show that there is no ex-ante efficient lottery supported on $S$. By Lemma~\ref{lm_exante_support_characterization}, it is enough to demonstrate that the system~\eqref{eq_alpha_general} has a solution $\alpha$. We construct it as follows. Put $$\alpha_x=\big|\{k\,\, \colon a_k=x\}\big|-\big|\{k\,\, \colon b_k=x\}\big|.$$
Then inequality~\eqref{eq_Fishburn} is equivalent to~\eqref{eq_alpha_general} and, since the former is strict for some $i$ and $x$, the latter is strict as well.

\subsection{Proof of Proposition~\ref{prop:basicdichotomous}}\label{sec_proof_minimal_dichotomous}
We will need an auxiliary lemma that bounds the number of Pareto optimal outcomes in a profile $(\succeq_i)_{i\in N}$ of dichotomous preferences over $X$. 

Denote by $A_x$ the set of agents that find an outcome $x\in X$ acceptable. We call a pair of distinct outcomes $x$ and $y$ equivalent if $A_x=A_y$. A collection $Y\subseteq X$ consists of non-equivalent outcomes if no pair of outcomes from the collection is equivalent.
\begin{lemma}\label{lm_Sperner}
The number of non-equivalent Pareto optimal outcomes is at most $\binom{|N|}{\left\lfloor |N|/2 \right\rfloor}$. If this upper bound is attained, then either all non-equivalent Pareto optimal outcomes are acceptable for exactly $\left\lfloor |N|/2 \right\rfloor$ agents or all are acceptable for $\left\lceil |N|/2 \right\rceil$ agents.
\end{lemma}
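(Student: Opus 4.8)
The plan is to establish the bound on non-equivalent Pareto optimal outcomes by translating Pareto optimality into a combinatorial statement about the acceptance sets $A_x \subseteq N$, and then invoke Sperner's theorem. First I would observe that for dichotomous preferences, outcome $x$ Pareto dominates outcome $y$ precisely when every agent who accepts $y$ also accepts $x$ (so $A_y \subseteq A_x$) and at least one agent accepts $x$ but not $y$ (so $A_y \subsetneq A_x$). Indeed, since each $u_i(x) = x_i \in \{0,1\}$, we have $x \succeq_i y$ iff $x_i \geq y_i$, and the strict domination of $y$ by some outcome is equivalent to the existence of an outcome $z$ with $A_y \subsetneq A_z$. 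Therefore $y$ is Pareto optimal if and only if there is no outcome $z$ with $A_y \subsetneq A_z$, i.e. $A_y$ is maximal among the acceptance sets under inclusion.

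Next I would restrict attention to a collection $Y$ of non-equivalent Pareto optimal outcomes, so that the map $x \mapsto A_x$ is injective on $Y$ and its image is a family of subsets of $N$. The key claim is that this family forms an \emph{antichain} in the Boolean lattice $2^N$: if $A_x \subsetneq A_{x'}$ for two Pareto optimal outcomes, then $x'$ would strictly Pareto dominate $x$, contradicting Pareto optimality of $x$. Hence no two distinct acceptance sets in the image are nested, which is exactly the antichain condition. Sperner's theorem then bounds the size of any antichain in $2^N$ by $\binom{|N|}{\lfloor |N|/2 \rfloor}$, giving the first assertion immediately.

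For the equality case, I would appeal to the characterization of extremal antichains in Sperner's theorem: the maximum size $\binom{|N|}{\lfloor |N|/2 \rfloor}$ is attained \emph{only} by the family of all subsets of a single size $\lfloor |N|/2 \rfloor$ or (when $|N|$ is odd, these coincide in cardinality with) all subsets of size $\lceil |N|/2 \rceil$. When $|N|$ is even these two middle layers coincide, and when $|N|$ is odd the two middle binomial coefficients are equal, so both layers are extremal antichains and no other antichain of maximum size exists. Translating back, attaining the bound forces every acceptance set $A_x$ for $x \in Y$ to have the same cardinality, either $\lfloor |N|/2 \rfloor$ or $\lceil |N|/2 \rceil$, which is precisely the stated dichotomy.

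I expect the main subtlety to be the characterization of extremal antichains rather than the bound itself: Sperner's theorem in its basic form only gives the inequality, so I would need to cite (or recall) the strengthening due to the equality analysis of the LYM inequality, which asserts that the only maximum antichains are the middle layer(s). Verifying the precise statement for both even and odd $|N|$ — namely that the extremal antichains are exactly the uniform families at level $\lfloor |N|/2 \rfloor$ or $\lceil |N|/2 \rceil$ — is where care is required, but this is a standard consequence of the equality condition in the LYM inequality and requires no new argument specific to the resource-allocation setting.
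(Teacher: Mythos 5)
Your proposal is correct and follows essentially the same route as the paper: identify the acceptance sets of non-equivalent Pareto optimal outcomes as an antichain in $2^N$ and invoke Sperner's theorem together with its equality characterization (only the middle layer(s) attain the maximum). Your additional care about the extremal case via the LYM equality analysis is sound but does not change the argument.
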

\begin{proof}
Denote by $X_{\not=}^*\subseteq X^*$ a collection of non-equivalent Pareto optimal outcomes. 
For distinct $x,y\in X_{\not=}^*$, the sets $A_x$ and $A_y$ cannot be subsets of each other as this would contradict Pareto optimality. Thus the collection of sets $(A_x)_{x\in X_{\not=}^*}$ forms an  antichain in the power set of $N$. Hence, the size of $X_{\not=}^*$ is bounded by the largest possible cardinality of such an antichain. By the theorem of \cite{sperner1928satz}, the largest cardinality of an antichain of subsets of an $n$-element set is equal to $\binom{n}{\left\lfloor n/2 \right\rfloor}$ and the bound is attained on antichains composed of either all $\left\lfloor n/2 \right\rfloor$-subsets or all $\left\lceil n/2 \right\rceil$-subsets.
\end{proof}

\begin{proof}[Proof of Proposition~\ref{prop:basicdichotomous}]
Consider the following preference profile with $5$ agents $N=\{1,2,3,4,5\}$ and $4$ outcomes $X=\{a,b,c,d\}$ and show that there is an ex-post efficient lottery that is not ex-ante efficient:
\[ 
\begin{array}{c|cccc}
 & a & b & c & d\\
 \hline
1 & 1 & 0 & 1 & 0 \\
2 & 0 & 1 & 0 & 1 \\
3 & 1 & 0 & 0 & 1  \\
4 & 1 & 0 & 0 & 0  \\
5 & 0 & 1 & 1 & 0 \\
\end{array}
\]
Observe that each outcome is Pareto optimal, as the set of agents who deem one acceptable is never a subset of the set that deems another acceptable. Consider two lotteries $p=\left(\frac{1}{2},\frac{1}{2},0,0\right)$ and $q=\left(0,0,\frac{1}{2},\frac{1}{2}\right)$. The lottery $p$ gives each agent an acceptable outcome with probability $\frac{1}{2}$. The lottery $q$ never gives an acceptable outcome to agent $4$, while all other agents get an acceptable outcome with probability $\frac{1}{2}$ and so are indifferent between $q$ and $p$. We conclude that $p\succeq_i q$ for all agents $i$ and $p\succ_4 q$. Thus $q$ is not ex-ante efficient despite that it is ex-post efficient.

If one tries to reduce the number of agents in this construction, then one of the four outcomes becomes Pareto dominated by one of the others. We proceed to prove that, indeed, ex-ante efficiency is the same as ex-post efficiency for $|N|\leq 4$.

Consider a profile of dichotomous preferences $(\succeq_i)_{i\in N}$  with $|N|\leq 4$ agents. 
Towards a contradiction, assume that there  is an ex-post efficient lottery that is not ex-ante efficient. Without loss of generality, all outcomes $X$ are Pareto optimal (i.e., $X=X^*$) and non-equivalent. 

By Proposition~\ref{prop_small_profiles_general}, ex-ante and ex-post efficiency coincide if there are at most three Pareto optimal outcomes. Hence, $|X^*|\geq 4$. By Lemma~\ref{lm_Sperner}, $|X^*|\leq \binom{|N|}{\left\lfloor |N|/2 \right\rfloor}$. This is only possible if $|N|\geq 4$. Thus $|N|=4$ since $|N|$ is assumed to be at most four. It will be convenient to enumerate the agents so that $N=\{1,2,3,4\}$.

For any dichotomous profile where each outcome is acceptable for the same number of agents, ex-ante and ex-post efficiency coincide; see Example~\ref{ex_dichotomous_equisupport}.
Thus outcomes in $X^*$ cannot all have the same number of supporters. No outcome can be supported by all the four agents as it would Pareto dominate all other outcomes; similarly, the case of zero supporters is excluded. We conclude that either there must be an outcome~$d$ acceptable for exactly one agent or for exactly three agents. 

First, consider the case where there is an outcome $d$ acceptable for exactly one agent, say, agent~$4$. Since, $d$ is Pareto optimal and all outcomes are non-equivalent, $d$ must be the only outcome acceptable for agent~$4$. Consider the reduced profile $(\succeq_i)_{i\in N'}$ obtained by eliminating agent~$4$ and her outcome $d$, i.e., $N'=\{1,2,3\}$ and $X'=X\setminus\{d\}$. Hence, $X'$ must consist of Pareto optimal outcomes and contain at least three outcomes as $|X|\geq 4$. By Lemma~\ref{lm_Sperner}, $|X'|\leq 3$ and thus $X'$ contains exactly three outcomes. The same lemma implies that either all these outcomes are acceptable for exactly one agent or all are acceptable for exactly two agents. Therefore, the original profile of preferences must have one of the two possible forms:
\[ 
\begin{array}{c|cccc}
 & a & b & c & d\\
 \hline
1 & 1 & 0 & 0 & 0 \\
2 & 0 & 1 & 0 & 0 \\
3 & 0 & 0 & 1 & 0  \\
4 & 0 & 0 & 0 & 1  
\end{array}
\qquad\mbox{or}\qquad
\begin{array}{c|cccc}
 & a & b & c & d\\
 \hline
1 & 0 & 1 & 1 & 0 \\
2 & 1 & 0 & 1 & 0 \\
3 & 1 & 1 & 0 & 0  \\
4 & 0 & 0 & 0 & 1  
\end{array}.
\]
For both of these profiles ex-ante and ex-post efficiency coincide which follows from item~\ref{item_lamda_criterion} of Proposition~\ref{prop:characterizdichotomous} if we take $\lambda=(1,1,1,1)$ for the first profile and $\lambda=(1,1,1,2)$ for the second.
We conclude that having an outcome $d$ supported by exactly one agent is incompatible with the assumption that ex-ante and ex-post efficiency are distinct.

Now, consider the case where there is an outcome $d$ acceptable for exactly three agents. Recall that in the reversal $(\succeq_i')_{i\in N}$ of $(\succeq_i)_{i\in N}$, acceptable outcomes  are replaced by unacceptable and vice versa for each agent.  By Lemma~\ref{lm_dual}, ex-ante and ex-post efficiency are distinct for the reversal $(\succeq_i')_{i\in N}$. On the other hand, the outcome $d$ is acceptable to exactly one agent in $(\succeq_i')_{i\in N}$ which, as we already showed, is incompatible with the assumption that the sets of ex-ante and ex-post efficient lotteries are not the same. 
This contradiction implies that ex-ante efficiency coincides with ex-post efficiency for any profile of dichotomous preferences with at most four agents.

\end{proof}

\subsection{Proof of Proposition~\ref{prop:characterizdichotomous}}\label{sec:pfchardich}

Let us prove that $(1)\Leftrightarrow(2)$. We know that $\mathrm{RSD}\big((\succeq_i)_{i\in N}\big)$ is ex-post efficient in any domain of preferences (Appendix~\ref{app_SD}). Hence, if the sets of ex-post efficient lotteries coincide with the set of ex-ante efficient lotteries,  $\mathrm{RSD}\big((\succeq_i)_{i\in N}\big)$ is ex-ante efficient, i.e., $(1)\Rightarrow (2)$. To show that $(2)\Rightarrow (1)$, it is enough to check that the lottery $p=\mathrm{RSD}\big((\succeq_i)_{i\in N}\big)$ places a non-zero weight to all the Pareto optimal outcomes $x$ and then apply Corollary~\ref{cor_ex_ante_support}.
Let $A_x$ be the set of agents who deem an outcome $x$ acceptable.
For Pareto optimal outcomes $x$ and $y$, the sets $A_x$ and $A_y$ cannot be proper subsets of each other.  Thus for any Pareto optimal $x$ and any ordering of agents having $A_x$ as the prefix, Serial Dictatorship will choose $x$. Thus all Pareto optimal outcomes have strictly positive probability under RSD. We conclude that $(1)\Leftrightarrow(2)$.

Now we show that $(1)\Leftrightarrow (3)$. Recall that an outcome $x$ is identified with the indicator vector $x\in \{0,1\}^N$ of those agents who deem it acceptable. Then any utilitarian representation of the dichotomous profile of preferences has the following form: $u_i(x)=\lambda_i x_i+c_i$ for some constants $\lambda_i>0$ and $c_i\in \Re$. Indeed, each agent $i$ is indifferent between all her acceptable outcomes, and between all her unacceptable outcomes, so we need to specify only two numbers: $i$'s utility for unacceptable outcomes $c_i$ and for acceptable ones $c_i+\lambda_i$. With this observation, the equivalence $(1)\Leftrightarrow (3)$ becomes a particular case of Criterion~\ref{crit_main}~(d).

The equivalence $(1)\Leftrightarrow (4)$ is a particular case of Criterion~\ref{crit_main}~(f). Indeed, consider the inequality~\eqref{eq_Fishburn_criterion} from the criterion:
$$
\left|\{k\,:\, a_k\succ_i x \}  \right|\geq \left|\{k\,:\, b_k\succ_i x \}  \right|
$$
In the dichotomous domain, if $x$ is acceptable to $i$, then we get zero on both sides and the condition is trivially satisfied. If $x$ is unacceptable to $i$, the inequality boils down to 
$$\sum_{k=1}^K (a_k)_i\geq \sum_{k=1}^K (b_k)_i,$$
i.e., we get inequality~\eqref{eq_Fishburn_dichotomous} from Proposition~\ref{prop:characterizdichotomous}. It remains to check that each the number of times each outcome is repeated in sequences $(a_k)_{k=1,\ldots, K}$ or $(a_k)_{k=1,\ldots, K}$ can be bounded in terms of $|N|$. From Criterion~\ref{crit_main}~(f), we know that the number of repetitions can be bounded by $\exp\left(\frac{1}{2}|X^*|\cdot\ln |X^*|\right)$, where $X^*$ is the set of Pareto optimal outcomes. Recall that the two outcomes are equivalent if all agents are indifferent between them. Clearly, the bound on the number of repetitions remains true if $|X^*|$ is replaced with the number of  non-equivalent outcomes in $X^*$. By Lemma~\ref{lm_Sperner}, the number of such outcomes cannot exceed $\binom{|N|}{\left\lfloor |N|/2 \right\rfloor}$ and thus the number of repetitions is at most $\exp\left(\frac{1}{2}\binom{|N|}{\left\lfloor |N|/2 \right\rfloor}\cdot\ln \binom{|N|}{\left\lfloor |N|/2 \right\rfloor}\right)$.

\subsection{Proof of Proposition~\ref{prop_singlepeak}}\label{sec_singlepeak_proof}
We consider a profile of single-peaked preferences $(\succeq_i)_{i\in N}$ over $X$ and demonstrate that the set of ex-ante efficient lotteries coincides with the set of ex-post efficient lotteries.
It is convenient to enumerate the outcomes in the increasing order according to $\succ$. Hence, without loss of generality, $X=\{1,2,\ldots, |X|\}$ and  $\succ$ coincides with the ordering  $>$ of the natural numbers.  

Recall that $x_i^*$ is the peak of an agent $i$. We denote the leftmost and the rightmost peaks by $l=\min_i x_i^*$ and $r=\max_i x_i^*$, respectively. It is easy to see that the set of Pareto optimal outcomes $X^*$ coincides with the interval $[l,r]=\{x\in X\,:\, l\leq x \leq r\}$. Indeed, all the agents prefer the peak $l$ to any outcome $x< l$ and,  similarly, they prefer $r$ to any $x>r$. Hence, $X^*\subset [l,r]$. For any pair of distinct outcomes  $x, y\in [l,r]$, there is an agent preferring $x$ to $y$ and an agent preferring $y$ to $x$ (e.g., agents with the leftmost and the rightmost peaks), so none of such outcomes can Pareto dominate another. Thus $X^*=[l,r]$.

Let us show that any lottery $p\in \Delta(X^*)$ is ex-ante efficient. By Criterion~\ref{crit_main}~(e), it is enough to demonstrate that that there is no $\alpha\in \Re^{[l,r]}$, except for $\alpha=0$, such that $\sum_{x\in [l,r]} \alpha_x=0$ and $\sum_{y\succeq_i x} \alpha_y\geq 0$ for all $i$ and $x$. Considering agents with the leftmost and the rightmost peaks, we see that $\sum_{y=l}^x \alpha_y\geq 0$ and $\sum_{y=x}^r \alpha_y\geq 0$ for any $x\in [l,r]$. Therefore,
$$0\leq \sum_{y=l}^x \alpha_y=\sum_{y=l}^r \alpha_y - \sum_{y=x+1}^r \alpha_y=0-\sum_{y=x+1}^r \alpha_y\leq 0.$$
Thus $\sum_{y=l}^x \alpha_y=0$ for any $x\in X^*$ and so $\alpha$ itself is equal to zero. We conclude that all $p\in \Delta(X^*)$ are ex-ante efficient.

\subsection{Proof of Proposition~\ref{prop_ballot_and_efficiency} }\label{sec_proof_ballot}
Let us consider a preference profile $(\succeq_i)_{i\in N}$ such that not all ex-post efficient lotteries are ex-ante efficient and show that there is a set $Y\subset X^*$ so that the restriction of the profile to $Y$ is equivalent to a ballot-counting profile. By Criterion~\ref{crit_main}~(f),
there exist two sequences of Pareto optimal outcomes $(a_k)_{k=1,\ldots, K}$ and $(b_k)_{k=1,\ldots, K}$ with no outcomes in common such that
\begin{equation}\label{eq_Fishburn_ballot_proof}
    \left|\{k\,:\, a_k\succ_i x \}  \right|\geq \left|\{k\,:\, b_k\succ_i x \}  \right|
\end{equation}
for any agent $i$ and any outcome $x$.
Let $Y$ be the set of outcomes $y\in X^*$ that enter one of these sequences. We represent each outcome $y\in Y$ by an envelope. If $y$ appears in $(a_k)_{k=1,\ldots, K}$, the envelope contains $\big|\{k\colon a_k=y\}\big|$ slips for candidate $A$ and, if $y$ appears in $(b_k)_{k=1,\ldots, K}$, then the envelope contains $\big|\{k\colon b_k=y\}\big|$ slips for candidate $B$. Then the total number of slips is the same for each of the candidates as the two sequences have the same number of elements. The condition i) of the ballot counting profile is equivalent to~\eqref{eq_Fishburn_ballot_proof}. The condition ii) follows from the fact that all the outcomes in $Y$ are Pareto optimal and, hence, for each pair of distinct $y,z\in Y$, there are agents $i$ and $j$ such that $y\succ_i z$ and $y\prec_j z$. 

It remains to prove  the opposite direction, i.e., that for a profile containing a ballot-counting sub-profile of Pareto optimal outcomes, not all ex-post efficient lotteries are ex-ante efficient. It is enough  to demonstrate that any ballot-counting profile itself has this property. Consider a ballot counting profile with a set of outcomes/envelopes $Z$. Each outcome $z\in Z$ is Pareto optimal by the condition~ii). Let $K=\frac{1}{2}|Z|$ and define two sequences $(a_k)_{k=1,\ldots, K}$ and 
$(b_k)_{k=1,\ldots, K}$. The former is composed of outcomes/envelopes $y$ containing slits for $A$ and the latter is composed of outcomes/envelopes in favor of $B$, and each outcome/envelope is repeated as many times as there are slits in it. Then the condition i) implies~\eqref{eq_Fishburn_ballot_proof}. Moreover, by choosing $x$ to be the second envelope in one of the permutations, we see that~\eqref{eq_Fishburn_ballot_proof} becomes strict. Thus, by Criterion~\ref{crit_main}~(f),  there are ex-post efficient lotteries that are not ex-ante efficient.

\bibliography{main}

\begin{thebibliography}{24}
\providecommand{\natexlab}[1]{#1}
\providecommand{\url}[1]{\texttt{#1}}
\expandafter\ifx\csname urlstyle\endcsname\relax
  \providecommand{\doi}[1]{doi: #1}\else
  \providecommand{\doi}{doi: \begingroup \urlstyle{rm}\Url}\fi

\bibitem[Abdulkadiro{\u{g}}lu and S{\"o}nmez(2003)]{abdulkadirouglu2003ordinal}
A.~Abdulkadiro{\u{g}}lu and T.~S{\"o}nmez.
\newblock Ordinal efficiency and dominated sets of assignments.
\newblock \emph{Journal of Economic Theory}, 112\penalty0 (1):\penalty0
  157--172, 2003.

\bibitem[Aziz(2014)]{aziz2014characterization}
H.~Aziz.
\newblock A characterization of stochastic dominance efficiency.
\newblock \emph{Economic Theory Bulletin}, 2\penalty0 (2):\penalty0 205--212,
  2014.

\bibitem[Aziz et~al.(2015)Aziz, Brandl, and Brandt]{aziz2015universal}
H.~Aziz, F.~Brandl, and F.~Brandt.
\newblock Universal pareto dominance and welfare for plausible utility
  functions.
\newblock \emph{Journal of Mathematical Economics}, 60:\penalty0 123--133,
  2015.

\bibitem[Aziz et~al.(2018)Aziz, Brandl, Brandt, and Brill]{aziz2018tradeoff}
H.~Aziz, F.~Brandl, F.~Brandt, and M.~Brill.
\newblock On the tradeoff between efficiency and strategyproofness.
\newblock \emph{Games and Economic Behavior}, 110:\penalty0 1--18, 2018.

\bibitem[Aziz et~al.(2019)Aziz, Bogomolnaia, and Moulin]{aziz2019fair}
H.~Aziz, A.~Bogomolnaia, and H.~Moulin.
\newblock Fair mixing: the case of dichotomous preferences.
\newblock In \emph{Proceedings of the 2019 ACM Conference on Economics and
  Computation}, pages 753--781, 2019.

\bibitem[Ben-Dan(2004)]{itaybendan}
I.~Ben-Dan.
\newblock Time sharing under dichotomous preferences.
\newblock Technion Research Thesis (unpublished), 2004.

\bibitem[Bogomolnaia and Moulin(2001)]{bogomolnaia2001new}
A.~Bogomolnaia and H.~Moulin.
\newblock A new solution to the random assignment problem.
\newblock \emph{Journal of Economic theory}, 100\penalty0 (2):\penalty0
  295--328, 2001.

\bibitem[Bogomolnaia et~al.(2005)Bogomolnaia, Moulin, and
  Stong]{bogomolnaia2005collective}
A.~Bogomolnaia, H.~Moulin, and R.~Stong.
\newblock Collective choice under dichotomous preferences.
\newblock \emph{Journal of Economic Theory}, 122\penalty0 (2):\penalty0
  165--184, 2005.

\bibitem[Border(2013)]{border2013alternative}
K.~C. Border.
\newblock Alternative linear inequalities.
\newblock Caltech Lecture Notes, 2013.

\bibitem[Brandl et~al.(2016)Brandl, Brandt, and
  Suksompong]{brandl2016impossibility}
F.~Brandl, F.~Brandt, and W.~Suksompong.
\newblock The impossibility of extending random dictatorship to weak
  preferences.
\newblock \emph{Economics Letters}, 141:\penalty0 44--47, 2016.

\bibitem[Brandt(2017)]{brandt2017rolling}
F.~Brandt.
\newblock Rolling the dice: Recent results in probabilistic social choice.
\newblock \emph{Trends in computational social choice}, pages 3--26, 2017.

\bibitem[Carroll(2010)]{carroll2010efficiency}
G.~Carroll.
\newblock An efficiency theorem for incompletely known preferences.
\newblock \emph{Journal of Economic Theory}, 145\penalty0 (6):\penalty0
  2463--2470, 2010.

\bibitem[Duddy(2015)]{duddy2015fair}
C.~Duddy.
\newblock Fair sharing under dichotomous preferences.
\newblock \emph{Mathematical Social Sciences}, 73:\penalty0 1--5, 2015.

\bibitem[Fishburn(1969)]{fishburn1969preferences}
P.~C. Fishburn.
\newblock Preferences, summation, and social welfare functions.
\newblock \emph{Management Science}, 16\penalty0 (3):\penalty0 179--186, 1969.

\bibitem[Fishburn(1984)]{fishburn1984probabilistic}
P.~C. Fishburn.
\newblock Probabilistic social choice based on simple voting comparisons.
\newblock \emph{The Review of Economic Studies}, 51\penalty0 (4):\penalty0
  683--692, 1984.

\bibitem[Gibbard(1973)]{gibbard1973manipulation}
A.~Gibbard.
\newblock Manipulation of voting schemes: a general result.
\newblock \emph{Econometrica: journal of the Econometric Society}, pages
  587--601, 1973.

\bibitem[Gibbard(1977)]{gibbard1977manipulation}
A.~Gibbard.
\newblock Manipulation of schemes that mix voting with chance.
\newblock \emph{Econometrica}, pages 665--681, 1977.

\bibitem[Katta and Sethuraman(2006)]{KATTA2006231}
A.-K. Katta and J.~Sethuraman.
\newblock A solution to the random assignment problem on the full preference
  domain.
\newblock \emph{Journal of Economic Theory}, 131\penalty0 (1):\penalty0
  231--250, 2006.
\newblock ISSN 0022-0531.
\newblock \doi{https://doi.org/10.1016/j.jet.2005.05.001}.
\newblock URL
  \url{https://www.sciencedirect.com/science/article/pii/S0022053105001079}.

\bibitem[Manea(2008)]{manea2008constructive}
M.~Manea.
\newblock A constructive proof of the ordinal efficiency welfare theorem.
\newblock \emph{Journal of Economic Theory}, 141\penalty0 (1):\penalty0
  276--281, 2008.

\bibitem[Manea(2009)]{maneaTE2009}
M.~Manea.
\newblock Asymptotic ordinal inefficiency of random serial dictatorship.
\newblock \emph{Theoretical Economics}, 4\penalty0 (2):\penalty0 165--197,
  2009.
\newblock URL
  \url{https://econtheory.org/ojs/index.php/te/article/view/20090165/0}.

\bibitem[McLennan(2002)]{mclennan2002ordinal}
A.~McLennan.
\newblock Ordinal efficiency and the polyhedral separating hyperplane theorem.
\newblock \emph{Journal of Economic Theory}, 105\penalty0 (2):\penalty0
  435--449, 2002.

\bibitem[Satterthwaite(1975)]{satterthwaite1975strategy}
M.~A. Satterthwaite.
\newblock Strategy-proofness and arrow's conditions: Existence and
  correspondence theorems for voting procedures and social welfare functions.
\newblock \emph{Journal of economic theory}, 10\penalty0 (2):\penalty0
  187--217, 1975.

\bibitem[Sperner(1928)]{sperner1928satz}
E.~Sperner.
\newblock Ein satz {\"u}ber untermengen einer endlichen menge.
\newblock \emph{Mathematische Zeitschrift}, 27\penalty0 (1):\penalty0 544--548,
  1928.

\bibitem[Zhou(1990)]{zhou1990conjecture}
L.~Zhou.
\newblock On a conjecture by gale about one-sided matching problems.
\newblock \emph{Journal of Economic Theory}, 52\penalty0 (1):\penalty0
  123--135, 1990.

\end{thebibliography}

\appendix

\section{Linear systems with integral coefficients}\label{app_integral}
Throughout the paper, we rely on linear programming tools.  Here we highlight one of them which is quite basic but apparently not widely known and may be useful beyond applications to ex-ante and ex-post efficiency.
Consider a homogeneous linear system with integral coefficients:
\begin{equation}\label{eq_weak_hom_LP}
A x \leq 0
\end{equation}
where $x\in\Re^m$ is a vector with real coordinates and $A\in \Z^{n\times m}$ is an $n\times m$ matrix composed of integers. This linear system always has a solution $x=0$. We call a solution non-trivial if at least one of the inequalities is strict. Note that a non-zero solution is not necessarily non-trivial.

Fix $k\geq 0$ and consider a $k'\times k'$ sub-matrix $A'$ of $A$ with $0\leq k'\leq k$.
Denote by $q_{k}(A)$ the maximal absolute value of the determinant of $A'$, where the maximum is taken over all such sub-matrices:
$$q_k(A)=\max_{A'\subset A}|\det A'|.$$
For convenience, we agree that the determinant of a $0\times 0$ matrix is equal to $1$. In particular, $q_k(A)\geq 1$ for any $A$ and $k$. 
\begin{lemma}[non-trivial solutions]\label{lm_integrality}
The system~\eqref{eq_weak_hom_LP} has a non-trivial solution if and only if there is a non-trivial \emph{integral} solution $x\in \Z^m$ such that $$|x_j|\leq q_{m-1}(A)$$ for any $j=1,\ldots,m$. 
\end{lemma}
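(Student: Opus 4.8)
The plan is to prove both directions, with the forward direction (existence of *any* non-trivial solution implies existence of a small *integral* one) being the substantive content. The reverse direction is immediate: an integral solution with the stated bound is in particular a real solution, and if it is non-trivial then we are done.

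For the forward direction, the strategy is to reduce to a face of the solution cone and invoke Cramer's rule. First I would observe that a non-trivial solution $x$ satisfies $Ax \le 0$ with at least one strict inequality; after relabeling, partition the rows of $A$ into those where equality holds at $x$ and those where the inequality is strict. Among the strict rows, at least one exists, say row $a^{\top}$; I want to find a new solution lying on a low-dimensional face while keeping that row strict. The cone $C = \{y : Ay \le 0,\ a^{\top} y \le 0\}$ intersected with a normalizing hyperplane (e.g. $a^{\top} y = -1$) is a nonempty polyhedron, and any nonempty polyhedron of the form $\{y : By \le c\}$ has a vertex when it contains no line, or can be handled by passing to a pointed reformulation. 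The key idea is that a minimal face / vertex of the feasible region is determined by a maximal linearly independent subset of the active constraints, hence by a square linear subsystem $A' y = c'$ of size at most $m$ (in fact the vertex satisfies $m$ linearly independent tight rows). By Cramer's rule, such a vertex has rational coordinates whose numerators and denominators are determinants of submatrices of the augmented integral matrix.

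The technical heart is the bookkeeping that turns ``rational vertex'' into ``integral vector with the claimed bound.'' Concretely, I would argue: choose $m$ linearly independent tight constraints at the vertex, giving an invertible $m \times m$ integral matrix $M$ (a submatrix of $A$ in the homogeneous-equality part) with $My = 0$ on the equality rows and one inhomogeneous normalization. Because the system is homogeneous in $A$, the solution direction lives in the kernel of an $(m-1) \times m$ integral submatrix of rank $m-1$; its one-dimensional solution space is spanned by the vector of signed $(m-1) \times (m-1)$ minors (the standard cofactor/cross-product construction). That integer vector $x$ automatically satisfies $|x_j| \le q_{m-1}(A)$ by definition of $q_{m-1}(A)$, since each coordinate is $\pm$ a determinant of an $(m-1) \times (m-1)$ submatrix. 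I then need to check that this cofactor vector still satisfies all of $Ax \le 0$ (it lies in the correct cone up to sign, and I fix the sign so the inequalities point the right way) and keeps at least one inequality strict — the strictness is inherited from the chosen active-versus-inactive split.

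The step I expect to be the main obstacle is guaranteeing that the cofactor (kernel) vector is genuinely \emph{non-trivial}, i.e. that some inequality of $Ax \le 0$ remains \emph{strict}, rather than all becoming equalities. Passing to a minimal face can inadvertently activate every constraint. The fix is to be careful about which face we descend to: we must retain at least one originally-strict constraint as a strict inequality. I would handle this by not minimizing blindly, but by selecting a face that is minimal among those on which a fixed strict row $a^{\top} y < 0$ stays strict; then the active constraints defining that face number at most $m-1$ (one degree of freedom is spent keeping $a^{\top} y$ off its boundary), the kernel of the corresponding $(m-1)$-rowed integral matrix is one-dimensional, and the cofactor vector spanning it has the required bound and the required strictness. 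Verifying that this selection is always possible, and that the sign of the cofactor vector can be chosen consistently, is the delicate part; everything else is Cramer's rule plus the definition of $q_{m-1}(A)$.
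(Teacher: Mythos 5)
Your core idea---descend to a vertex or extreme ray of the solution set, apply Cramer's rule, and bound the coordinates by $(m-1)\times(m-1)$ minors of $A$---is exactly the engine of the paper's proof, and your reverse direction and your treatment of strictness (fix a strict row $a^{\top}$ and normalize by $a^{\top}y=-1$, or keep that row off its boundary) are sound. The genuine gap is the step you dismiss with ``can be handled by passing to a pointed reformulation.'' When $\mathrm{rank}(A)<m$, the polyhedron $\{y: Ay\le 0,\ a^{\top}y=-1\}$ contains lines and has no vertices, the cone $\{y:Ay\le 0\}$ has no extreme rays, and your fallback claim---that a face minimal among those keeping $a^{\top}y<0$ strict is cut out by $m-1$ linearly independent rows of $A$ with one-dimensional kernel---fails there: for a single-row system in $\Re^{3}$ the relevant face is the open half-space itself and the active set is empty, so there is no cofactor vector to write down. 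Since the lemma is invoked for systems with no rank hypothesis, this case cannot be excluded; and a reformulation that alters the constraint matrix (e.g.\ projecting onto the row space of $A$) would destroy the property that the relevant determinants are minors of $A$ itself, which is precisely what the bound $q_{m-1}(A)$ requires.

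The paper closes this gap with a device you should adopt: adjoin the $2m$ box constraints $-1\le x_j\le 1$ to the system~\eqref{eq_weak_hom_LP}. The feasible set becomes a bounded, hence pointed, polytope containing the rescaled real solution, so an extreme point $z^{*}$ with $a^{\top}z^{*}<0$ exists unconditionally. At $z^{*}$ the active box constraints pin some coordinates to the integers $\pm1$ (at least one such coordinate exists, since otherwise the active constraints would form a homogeneous square system with $z^{*}$ as its unique solution, forcing $z^{*}=0$), and substituting these values leaves a nonsingular square subsystem $A'z'=b'$ in which $A'$ is a submatrix of $A$ of size at most $m-1$ and $b'$ is integral. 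Multiplying $z^{*}$ by $|\det A'|\le q_{m-1}(A)$ then yields a non-trivial integral solution whose coordinates are bounded by $q_{m-1}(A)$ simply because every coordinate of $z^{*}$ has absolute value at most $1$; this also spares you the delicate cofactor and sign bookkeeping you flagged, since the Cramer numerators never need to be bounded separately.
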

 Matrices with $q_{\min\{n,m\}}(A)=1$ are called totally unimodular. For such matrices, we obtain a well-known result that there is a non-trivial solution $x$ whenever there is such a solution  with $x_j\in\{-1,0,1\}$.
Recall the Hadamard inequality: the absolute value of the determinant of a matrix does not exceed the product of Euclidean norms of its columns. Consequently, $q_k(A)$ admits a simple upper bound
$$q_k(A)\leq  \Big(\sqrt{\min\{n,\,m,\,k\}}\cdot \max_{i,j} |A_{i,j}|\Big)^{\min\{n,\,m,\,k\}}.$$
Indeed, the size of the square submatrix is at most $l=\min\{n,\,m,\, k\}$ and the Euclidean norm of a vector of length $l$ does exceed $\sqrt{l}$ times its maximal element. 
Combining the Hadamard inequality with Lemma~\ref{lm_integrality}, we conclude that for systems with $A_{i,j}\in\{-1,0,1\}$ it is enough to consider integral $x$ with
\begin{equation}\label{eq_bound_x_01_matrix}
|x_j|\leq \exp\left(\frac{\min\{n,m\}}{2}\ln\left( \min\{n,m\}\right)\right).   
\end{equation}

\begin{proof}[Proof of Lemma~\ref{lm_integrality}]
If the problem has a non-trivial integral solution, then, clearly, it has a real one (take the same solution $x$). Hence, one direction  is immediate.
The hard direction is to show that the existence of a non-trivial solution $y\in \Re^m$ implies the existence of a non-trivial integral solution $x\in\Z^m$ satisfying the required bound. 
Let us start from $y$ and construct $x$. 

Without loss of generality, the first inequality is strict for $y$, i.e., $\sum_{j} A_{1,j}\cdot y_j<0$.
Let us add $2m$ extra inequalities 
$$-1 \leq x_j\leq 1, \quad j=1,\ldots,m,$$
to the system  $Ax\leq 0$. The new system has a non-trivial solution $z=\frac{1}{\max_j|y_j|}\cdot y$.  The set of all its solutions is a bounded convex polytope thanks to the added constraints. This polytope contains a point $z$ such that  $\sum_{j} A_{1,j}\cdot z_j<0$. Hence, it has at least one extreme point $z^*$ with $\sum_{j} A_{1,j}z_j^*<0$. As an extreme point of the solution set, it can be obtained as the unique solution to the system of equations formed by active constraints:
$$\left\{\begin{array}{cccc}
    \sum_{j=1}^m A_{i,j}\cdot z_j^*&=&0, & i\in I  \\
     z_j^*&=&1, & j\in J_1\\
     z_j^*&=&-1, & j\in J_{-1}\\
\end{array} \right..$$
for some sets $I\subset\{1,\ldots, n\}$,  $J_1\subset \{1,\ldots,m\}$ and $J_{-1}\subset \{1,\ldots,m\}$. The  values of $z_j^*$ for $j\in J_1\cup J_{-1}$ are given by the second and the third families of equations and we can get rid of these two families by substituting these values in the first family. For the remaining coordinates $(z_j^*)_{j\in \{1,\ldots m\}\setminus (J_1\cup J_{-1})}$, we get a system
$$A' (z_j^*)_{j\in \{1,\ldots, m\}\setminus (J_1\cup J_{-1})}= b',$$
where $A'$ is a square sub-matrix of $A$ corresponding to linearly independent equalities and $b'$ is an integral vector originated as a result of substituting the known coordinates $z_j$ to the first family of equalities. Since $z^*\ne 0$, at least one of the sets $J_1$ and $J_{-1}$ is non-empty. Hence, the dimension of $(z_j^*)_{j\in \{1,\ldots m\}\setminus (J_1\cup J_{-1})}$ is at most $m-1$. Thus $A'$ is a non-degenerate square sub-matrix of $A$ with the dimension of at most $m-1$ and so, its determinant~$D$  enjoys the bound  $1\leq |D|\leq q_{m-1}(A)$.

By Cramer's rule, each coordinate of the solution  $(z_j^*)_{j\in \{1,\ldots m\}\setminus (J_1\cup J_{-1})}$ can be represented as a rational number with the denominator $|D|$. We conclude that $x=|D|\cdot z^*$  is a non-trivial integral solution to the original system. As $z^*$ has all the coordinates between $-1$ and $1$ and $|D|\leq q_{m-1}(A)$, the constructed solution satisfies the desired bound  $|x_j|\leq q_{m-1}(A)$ for all $j$.
\end{proof}

\section{Ex-post Efficiency of Serial Dictatorship}\label{app_SD}

 \begin{proposition}
 For any profile of preferences $(\succeq_i)_{i\in N}$ and any permutation $\sigma$, all outcomes of the Serial Dictatorship  $\mathrm{SD}\big((\succeq_i)_{i\in N},\sigma\big)$ are Pareto optimal.
 \end{proposition}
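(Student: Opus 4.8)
The plan is to prove that every outcome selected by Serial Dictatorship is Pareto optimal by a direct contradiction argument, exploiting the lexicographic structure of $\succeq_{\mathrm{lex}}$. Let $x \in \mathrm{SD}\big((\succeq_i)_{i\in N},\sigma\big)$, so $x$ is a maximal element of $\succeq_{\mathrm{lex}}$. Suppose toward a contradiction that $x$ is not Pareto optimal. Then there exists an outcome $y$ with $y \succeq_i x$ for all $i \in N$ and $y \succ_j x$ for some $j \in N$. The goal is to show that such a $y$ must satisfy $y \succ_{\mathrm{lex}} x$, contradicting the maximality of $x$.

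The key step is to translate the Pareto-improvement into a strict lexicographic improvement. First I would relabel agents along $\sigma$, writing $\sigma(1), \sigma(2), \dots, \sigma(n)$ for the dictatorship order. Since $y \succeq_i x$ for every agent, in particular $y \succeq_{\sigma(l)} x$ for each $l$; this rules out any agent strictly preferring $x$ to $y$, so we can never have $x \succ_{\sigma(l)} y$. Now let $k$ be the first position in the order $\sigma$ at which some agent strictly prefers $y$ to $x$; such a $k$ exists because $y \succ_j x$ for at least one $j$. For every earlier agent $\sigma(l)$ with $l < k$, we have $y \succeq_{\sigma(l)} x$ but not $y \succ_{\sigma(l)} x$, hence $x \simeq_{\sigma(l)} y$. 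At position $k$ we have $y \succ_{\sigma(k)} x$. By the definition of $\succeq_{\mathrm{lex}}$ given in the text (ties for the first $k-1$ agents, strict preference at agent $k$), this is exactly the condition $y \succ_{\mathrm{lex}} x$.

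This contradicts the fact that $x$ is a maximal element of $\succeq_{\mathrm{lex}}$, completing the proof. The argument only uses completeness and transitivity of each $\succeq_i$, so it applies to the general weak-order setting of the paper and not merely to strict preferences.

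I do not expect a serious obstacle here; the statement is essentially a bookkeeping exercise in unwinding the definition of the lexicographic order. The one point requiring mild care is confirming that the indifference relations $x \simeq_{\sigma(l)} y$ for $l < k$ follow correctly from combining $y \succeq_{\sigma(l)} x$ (Pareto) with the failure of strict preference (minimality of $k$); this uses completeness of each preference to conclude that weak preference without strict preference forces indifference. Once that is in place, matching against the displayed definition of $\succ_{\mathrm{lex}}$ is immediate.
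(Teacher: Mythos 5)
Your proof is correct and follows essentially the same argument as the paper: assume a Pareto improvement $y$ exists, take the first agent in the order $\sigma$ who is not indifferent between $x$ and $y$ (who must then strictly prefer $y$, since no agent can strictly prefer $x$), and conclude $y \succ_{\mathrm{lex}} x$, contradicting maximality. No gaps.
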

 \begin{proof}
 Recall that  the lexicographic preference $\succeq_\mathrm{lex}$ associated with a  profile $(\succeq_i)_{i\in N}$ and ordering $\sigma$ is defined as follows: $x\succ_\mathrm{lex} y$ if there exist $k\in \{1,\ldots, n\}$ such that all agents $i=\sigma(1),\ldots, \sigma (k-1)$ are indifferent between $x$ and $y$ and $x\succ_{\sigma(k)} y$. Let $x$ be a maximal element  of $X$ with respect to  $\succeq_\mathrm{lex}$, i.e., $x$ is an outcome of $\mathrm{SD}\big((\succeq_i)_{i\in N},\sigma\big)$. Our goal is to demonstrate that $x$ is Pareto optimal. Towards a contradiction, assume that there is $z\in X$ such that $z\succeq_i x$ for all $i\in N$ and  $z\succ_j x$ for some $j\in N$. Let $k$ be the minimal number such that the agent $i=\sigma(k)$ is not indifferent between $x$ and $z$. By the construction, all agent $i=\sigma(1),\ldots, \sigma (k-1)$ are indifferent between $x$ and $z$ while $i=\sigma(k)$ strictly prefers $z$ to $x$. We conclude that $z \succ_\mathrm{lex} x$ which contradicts the assumption that $x$ is the maximal element and completes the proof.
 \end{proof}

\section{Proof of Theorem~\ref{lm_welfarist}}\label{app_welfare}

In this section we prove Theorem~\ref{lm_welfarist}, i.e., we demonstrate that a lottery $p$ is ex-ante efficient if and only if there is a utilitarian representation $(u_i)_{i\in N}$ of agents' preferences such that $p$ is supported on outcomes $x$ with maximal welfare $\sum_{i\in N} u_i(x)$. 

We will need the Farkas lemma; see \cite{border2013alternative}.
\begin{lemma}\label{lm_Farkas}
A system 
$$\left\{\begin{array}{ccc}
A x&\geq& b\\
C x&=& d
\end{array}
\right.$$ 
with $A\in \Re^{n\times m}$ and $C\in \Re^{k\times m}$ has no solution $x\in \Re^m$ if and only if there exists $\lambda_{\geq}\in \Re_+^n$ and $\lambda_{=}\in \Re^k$ such that $\lambda_{\geq}^T A+ \lambda_{=}^T C =0\in \Re^m $ and $\lambda_\geq\cdot b + \lambda_=\cdot d >0$, where $\cdot$ denotes the dot product. 
\end{lemma}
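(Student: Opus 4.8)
The plan is to separate the easy implication (existence of multipliers certifies infeasibility) from the hard one (infeasibility forces the multipliers to exist), concentrating all the real content in the latter.

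The easy direction is a direct computation. Suppose, for contradiction, that a feasible $x$ and multipliers $(\lambda_{\geq},\lambda_{=})$ coexist. From $\lambda_{\geq}^T A+\lambda_{=}^T C=0$ I get $\lambda_{\geq}^T(Ax)+\lambda_{=}^T(Cx)=0$. Since $\lambda_{\geq}\geq 0$ together with $Ax\geq b$ gives $\lambda_{\geq}^T(Ax)\geq\lambda_{\geq}^T b$, while $Cx=d$ gives $\lambda_{=}^T(Cx)=\lambda_{=}^T d$, I obtain $0\geq\lambda_{\geq}^T b+\lambda_{=}^T d>0$, a contradiction. So the two systems can never be simultaneously solvable.

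For the converse I would first rewrite the mixed system as a single system of weak inequalities and then appeal to the homogeneous Farkas alternative for such systems. Turning $Ax\geq b$ into $-Ax\leq -b$ and splitting $Cx=d$ into $Cx\leq d$ and $-Cx\leq -d$, the system takes the form $Mx\leq e$ with row blocks $(-A,\,C,\,-C)$ and right-hand side $(-b,\,d,\,-d)$. The canonical alternative for a pure inequality system states that $Mx\leq e$ is infeasible if and only if there is $y\geq 0$ with $y^T M=0$ and $y^T e<0$. Writing $y=(\mu,\nu^{+},\nu^{-})$ for the three blocks and setting $\lambda_{\geq}=\mu$ and $\lambda_{=}=\nu^{-}-\nu^{+}$, the conditions $y^T M=0$ and $y^T e<0$ become exactly $\lambda_{\geq}^T A+\lambda_{=}^T C=0$ and $\lambda_{\geq}^T b+\lambda_{=}^T d>0$; note that $\lambda_{=}$ is automatically sign-unrestricted precisely because it arises as the difference of the two nonnegative multipliers attached to the paired inequalities $\pm Cx\le\pm d$. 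This reduction is routine bookkeeping.

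The real work is proving the canonical alternative, and here I see two routes. The geometric route observes that $Mx\leq e$ is feasible iff $e$ lies in the set $\{Mx+s:\ x\in\Re^{m},\ s\geq 0\}$, which is the Minkowski sum of the image of $M$ and the nonnegative orthant $\Re^{N}_{\geq 0}$ (with $N$ the number of rows), hence a finitely generated cone; if $e$ is not in this cone, a separating hyperplane produces a $y$ with $y\geq 0$ (because the cone contains the standard basis vectors), $y^T M=0$ (because it contains the subspace $\operatorname{Im} M$), and $y^T e<0$. The one genuinely nontrivial fact on this route — and the main obstacle — is that a finitely generated cone is closed, so that the separating-hyperplane theorem applies; I would establish this by induction on the number of generators, or via Carath\'eodory's theorem together with a compactness argument. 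The alternative, fully elementary route is Fourier--Motzkin elimination: eliminate $x_1,\dots,x_m$ one at a time, at each step replacing the inequalities by nonnegative combinations that annihilate the eliminated variable, until only constant inequalities $0\leq c_s$ remain; the original system is infeasible iff some $c_s<0$, and tracking the accumulated nonnegative coefficients yields $y\geq 0$ with $y^T M=0$ and $y^T e=c_s<0$. This route needs no topology at all, the only cost being the bookkeeping of the combination coefficients across the elimination steps.
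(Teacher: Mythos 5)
Your proof is correct, but it is worth noting that the paper does not actually prove this lemma: it cites \cite{border2013alternative} and offers only a heuristic gloss (``if a linear system is infeasible, some nonnegative combination of its inequalities is a single infeasible inequality, and since $x$ is unrestricted the combination must annihilate $x$''). You supply the missing content. Your easy direction is the same computation the paper's heuristic alludes to, and it is carried out correctly, including the sign bookkeeping $0=\lambda_{\geq}^T(Ax)+\lambda_{=}^T(Cx)\geq\lambda_{\geq}\cdot b+\lambda_{=}\cdot d>0$. Your reduction of the mixed system to a pure inequality system $Mx\le e$ with blocks $(-A,\,C,\,-C)$ and the recovery $\lambda_{=}=\nu^{-}-\nu^{+}$ is the standard and correct way to explain why the equality multipliers are sign-free, and your two routes to the canonical alternative (separation from a finitely generated cone, with closedness correctly flagged as the one nontrivial topological fact, or Fourier--Motzkin elimination) are both sound. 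What your approach buys is a self-contained argument at the cost of length; what the paper's approach buys is brevity, which is reasonable for a result this classical. If you wanted to match the paper's level of detail you could stop after the reduction to the canonical alternative and cite it; if you want a genuinely complete proof, the Fourier--Motzkin route is the one that closes every gap without importing any convex-analytic machinery.
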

Recall that the Farkas lemma follows from a general rule of thumb: If a system of linear (in)equalities is infeasible, there is a linear combination of these inequalities  such that the  resulting single inequality is infeasible. The vector $\lambda=(\lambda_\geq,\lambda_=)$ represents the weights in the linear combination. Since we do not impose any restrictions on $x$, the only way to make a single inequality infeasible is to guarantee that the left-hand side \emph{does not depend on $x$.} Hence, we must get zero on the left-hand side and something positive on the right-hand one. 

\begin{proof}[Proof of Theorem~\ref{lm_welfarist}]
Given a preference profile $(\succeq_i)_{i\in N}$ over a set $X$ of outcomes, we need to show that a lottery $p\in \Delta(X)$ is ex-ante efficient if and only if there is a utilitarian representation $(u_i)_{i\in N}$ such that
$p$ is supported on outcomes $x$ with maximal welfare $W(x)=\sum_{i\in N} u_i(x)$.

Necessity is straightforward. Indeed, let us show that if $p$ is supported on welfare-maximizing outcomes for some $(u_i)_{i\in N}$, then $p$ is ex-ante efficient. Towards a contradiction, assume that there is 
$p'\in\Delta(X)$ such that $p'\succeq_i p$ for all $i\in N$ and $p'\succ_j p$ for some $j$. By the weak inequality and the fact that $u_i$ represents $\succeq_i$, we infer that 
$$\sum_{x\in X}p_x'\cdot u_i(x)\geq \sum_{x\in X}p_x\cdot u_i(x)$$
for all $i\in N$ and the inequality is strict for $i=j$.
Summing up these inequalities, we conclude that 
$\sum_{x\in X} p_x'\cdot W(x)> \sum_{x\in X} p_x\cdot W(x)$.
Thus there exists a pair of outcomes $x$ with $p_x>0$ and $y$ with $p_x'>0$ such that $W(y)>W(x)$. This contradicts the assumption that $p$ is supported on outcomes with the maximal welfare.

To prove sufficiency, we start from an ex-ante efficient lottery $p$ and aim to construct a utilitarian representation $(u_i)_{i\in N}$ such that $p$ is supported on welfare-maximizing lotteries. We will show the existence of a utilitarian representation enjoying two additional properties: (1) $u_i(x)-u_i(y)\geq 1$ whenever $i$ strictly prefers $x$ to $y$ (2) all the outcomes $x$ from the support $\supp[p]$ of $p$ have zero welfare (so welfare is non-positive for all outcomes). 

The existence of $(u_i)_{i\in N}$ boils down to feasibility of the following linear system:
\begin{equation}\label{eq_system_u}
\left\{\begin{array}{cccc}
u_i(x)-u_i(y) &\geq& 1, & x \succ_i y\\
u_i(x)-u_i(y) &=& 0, & x \simeq_i y\\
-\sum_{i\in I} u_i(x)&\geq& 0, &  x\in X\setminus\supp[p] \\
\sum_{i\in I} u_i(x)&=& 0, &  x\in \supp[p]
\end{array} \right.,
\end{equation}
where unknowns are $(u_i(x))_{i\in N,\, x\in X}\in \Re^{N\times X}$. Towards a contradiction, assume that this system is infeasible. By the Farkas lemma (Lemma~\ref{lm_Farkas}), the infeasibility of this system means that there is a linear combination of these (in)equalities so that the resulting single inequality is infeasible. 
Denote the corresponding coefficients by $\lambda_{i}(x,y)$ for the first two conditions, $\lambda_-(x)$ for the third condition, and $\lambda_0(x)$ for the last one. 
We obtain the following dual system 
\begin{equation}\label{eq_dual_welfare_proof}
\left\{\begin{array}{cccc}
\sum_{y\colon  x\succeq_i y} \lambda_i(x,y)-\sum_{y\colon y\succeq_i x}\lambda_i(y,x)+\lambda_0(x)&=&0, & x\in \supp[p],\, i\in N\\
\sum_{y\colon  x\succeq_i y} \lambda_i(x,y)-\sum_{y\colon y\succeq_i x}\lambda_i(y,x)-\lambda_-(x)&=&0, & x\in X\setminus\supp[p],\, i\in N
\end{array} \right.
\end{equation}
that must have a solution such that $\lambda_i(x,y)\geq 0$ for $x\succ_i y$, $\lambda_i(x,y)\in\Re$ for $x\simeq_i y$, $\lambda_-(x)\geq 0$ for $x\not\in \supp[p]$, $\lambda_0(x)\in \Re$ for $x\in\supp[p]$, and  $\lambda_j(x,y)>0$ for some $j\in N$ and $x\succ_j y$.
Define $\alpha=\alpha(x)$ by
\begin{equation}\label{eq_alpha_dual_definition}
    \alpha(x)=\sum_{y\colon  x\succeq_i y} \lambda_i(x,y)-\sum_{y\colon y\succeq_i x}\lambda_i(y,x).
\end{equation}
By~\eqref{eq_dual_welfare_proof}, the value of the right-hand side does not depend on $i\in N$ and so the definition of $\alpha$ is correct.

Our goal is to show that a lottery $p'=p+\varepsilon\alpha$ for small positive $\varepsilon$ dominates $p$, i.e., $p'\succeq_i p$ for all agents $i$ and $p'\succ_j p$ for some $j$. 

First of all, we need to check that $p'$ is a lottery, i.e., all the weights are non-negative and sum up to one. From the second equation in~\eqref{eq_dual_welfare_proof}, we obtain 
$$\alpha(x)\geq 0\quad \mbox{for $x\in X\setminus \supp[p]$}$$
and so $p_x'\geq 0$ for all $x\in X$ and small enough $\varepsilon>0$. For each pair of distinct outcomes $z,w\in X$, the coefficient $\lambda_i(z,w)$ enters~\eqref{eq_alpha_dual_definition} with the positive sign for $x=z$ and with the negative sign for $x=w$. Hence,
$$\sum_{x\in X}\alpha(x)=0$$
because of cancellations. We get that $\sum_{x\in X} p_x'=1$, i.e., $p'$ is a lottery. 

Let us check that $p'\succeq_i p$. By the definition of stochastic dominance, this is equivalent to 
\begin{equation}\label{eq_dominance_alpha}
\sum_{x\colon x\succ_i w} \alpha(x)\geq 0
\end{equation}
for all outcomes $w\in X$. By~\eqref{eq_alpha_dual_definition}, we obtain 
$$\sum_{x\colon x\succ_i w} \alpha(x)=\sum_{\scriptsize{\begin{array}{c}
    x\succ_i w\\
     x\succeq_i y
\end{array}}}\lambda_i(x,y)-\sum_{\scriptsize{\begin{array}{c}
     x\succ_i w\\
     y\succeq_i x
\end{array}}}\lambda_i(y,x).$$
The second sum can be rewritten as follows:
$$\sum_{\scriptsize{\begin{array}{c}
     x\succ_i w\\
     y\succeq_i x
\end{array}}}\lambda_i(y,x)=\sum_{\scriptsize{\begin{array}{c}
     y\succ_i w\\
     y\succeq_i x\succ w
\end{array}}}\lambda_i(y,x)=\sum_{\scriptsize{\begin{array}{c}
     x'\succ_i w\\
     x'\succeq_i y'\succ w
\end{array}}}\lambda_i(x',y'),$$
where in the last equality we denoted $x'=y$ and $y'=x$. Thus 
\begin{equation}\label{eq_alpha_uppercontour}
\sum_{x\colon x\succ_i w} \alpha(x)=\sum_{\scriptsize{\begin{array}{c}
     x\succ_i w\\
     x\succeq_i y
\end{array}}}\lambda_i(x,y)-\sum_{\scriptsize{\begin{array}{c}
     x'\succ_i w\\
     x'\succeq_i y'\succ w
\end{array}}}\lambda_i(x',y')=\sum_{\scriptsize{\begin{array}{c}
     x\succ_i w\succeq_i y
\end{array}}}\lambda_i(x,y).
\end{equation}
Thus~\eqref{eq_dominance_alpha} holds thanks to non-negativity of $\lambda_i(x,y)$ for $x\succ_i y$. Moreover, we know that $\lambda_j(x,y)>0$ for agent $j$ and some $x\succ_j y$. Thus, choosing $w=y$ and $i=j$ in~\eqref{eq_alpha_uppercontour}, we see that  $p'\succ_j p$, which contradicts ex-ante efficiency of the lottery $p$. We conclude that the system~\eqref{eq_system_u} has a solution, i.e., there is a utilitarian representation of the preference profile such that $p$ is supported on outcomes with the highest welfare.
\end{proof}

\end{document}